\pgfplotsset{compat=1.14}
\pgfplotsset{compat=1.16}
\def\BibTeX{{\rm B\kern-.05em{\sc i\kern-.025em b}\kern-.08em
    T\kern-.1667em\lower.7ex\hbox{E}\kern-.125emX}}
\newtheoremstyle{cited}%
{.5\baselineskip\@plus.2\baselineskip
    \@minus.2\baselineskip}
{.5\baselineskip\@plus.2\baselineskip
    \@minus.2\baselineskip}
{\itshape}
{\parindent}
{}
{.}
{.5em}
{\textsc{\thmname{#1}} \thmnote{\normalfont#3}}
\theoremstyle{cited}
\newcounter{magicrownumbers}
\theoremstyle{plain}                  
\newtheorem{theorem}{Theorem}
\newtheorem{lemma}[theorem]{Lemma}
\newtheorem{proposition}[theorem]{Proposition}
\newtheorem{corollary}[theorem]{Corollary}         
\newtheorem{definition}[theorem]{Definition}
\newtheorem{example}[theorem]{Example}
\newtheoremstyle{cited}%
{.5\baselineskip\@plus.2\baselineskip
    \@minus.2\baselineskip}
{.5\baselineskip\@plus.2\baselineskip
    \@minus.2\baselineskip}
{\itshape}
{\parindent}
{}
{.}
{.5em}
{\textsc{\thmname{#1}} \thmnote{\normalfont#3}}
\newcommand{\set}[1]{\{#1\}}                    
\newcommand{\setof}[2]{\{{#1}\mid{#2}\}}        
\newcommand{\dan}[1]{\todo[inline,color=yellow]{\textsf{#1} \hfill \textsc{--Dan.}}}
\newcommand{\calC}{\mathcal C}
\newcommand{\calG}{\mathcal G}
\newtheorem{thm}{Theorem}[section]
\newtheorem{lmm}[thm]{Lemma}
\newtheorem{claim}[thm]{Claim}
\newcommand{\defeq}{\stackrel{\text{def}}{=}}
\newcommand{\shap}{\texttt{Shap}}
\newcommand{\TAB}{\makebox[2.5ex][r]{}}%
\newcommand{\nop}[1]{}
\date{}
\title{From Shapley Value to Model Counting and Back}
\author{
\begin{tabular}{cccc}
Ahmet Kara$^1$  & Dan Olteanu$^1$ &  Dan Suciu$^2$ \\
kara@ifi.uzh.ch &  olteanu@ifi.uzh.ch &  suciu@cs.washington.edu 
\end{tabular}\\ \\
$^1$University of Zurich  \enspace\enspace $^2$University of Washington
}
\begin{document}
\maketitle

\begin{abstract}

In this paper we investigate the problem of quantifying the contribution of each variable to the satisfying assignments of a Boolean function based on the Shapley value.

Our main result is a polynomial-time equivalence between computing Shapley values  and model counting for any class of Boolean functions that are closed under substitutions of variables with disjunctions of fresh variables. This result settles an open problem raised in prior work, which sought to connect the Shapley value computation to probabilistic query evaluation.

We show two applications of our result. First, the Shapley values can be computed in polynomial time over deterministic and decomposable circuits, since they are closed under OR-substitutions. Second, there is a polynomial-time equivalence between computing the Shapley value for the tuples contributing to the answer of a Boolean conjunctive query and counting the models in the lineage of the query. This equivalence allows us to immediately recover the dichotomy for Shapley value computation in case of self-join-free Boolean conjunctive queries; in particular, the hardness for non-hierarchical queries can now be shown  using a simple reduction from the \#P-hard problem of model counting for lineage in positive bipartite disjunctive normal form.
\end{abstract}

\section{Introduction}
\label{sec:introduction}
The Shapley value quantifies the fair contribution of a player to a wealth function that is shared by a set of players in a cooperative game~\cite{Shapley1953,roth1988shapley}. \nop{It is a principled  approach to assign fair scores to the players~\cite{roth1988shapley}.} For this reason, it has been used in a variety of applications ranging from bioinformatics to network analysis and machine learning: measuring the centrality and power of genes~\cite{MorettiFPB10} and the influence in social networks~\cite{NarayanamN11}; sharing profit between Internet providers~\cite{MaCLMR08,MaCLMR10}; finding key players in networks~\cite{CampenHHL18}; feature selection, explainability, multi-agent reinforcement learning, ensemble pruning, and data valuation~\cite{RozemberczkiWBY:ShapleyML:2022,MinhWLN:ExplainableAI:2022}. 

In this paper we investigate the problem of computing the Shapley value for  variables in Boolean functions. The Shapley values quantify the contribution of each variable to the satisfying assignments of the Boolean function.
Understanding the importance of variables to the outcome of a Boolean function has numerous applications~\cite{HammerKR00, HarderJBD20}. The nature of the Shapley values for the variables in Boolean functions can also serve as complexity-theoretic assumption for tractability in generalized constraint satisfaction problems with order predicates~\cite{BrakensiekGS23, kalai2004social}.
When focusing on functions representing the lineage of Boolean conjunctive queries in relational databases~\cite{DBLP:conf/pods/GreenKT07}, the Shapley values are used to support explanations for query answers. In this setting, the tuples in the input database are the players that contribute to the answer of a given query and the Shapley value assigns a score to each input tuple based on its contribution to the query answer.
Recent works in database theory and systems~\cite{DavidsonDFKKM22,DeutchFKM22,LivshitsBKS21,ReshefKL20} have made great progress towards charting the tractability frontier of computing the Shapley values of database tuples and proposed algorithms for exact and approximate computation.
We next highlight two key results from prior work.

First, for every Boolean query $Q$ and database $\bm D$, the problem of computing the Shapley value of any tuple in $\bm D$ reduces in polynomial time to the problem of computing $Q$ over a probabilistic version of $\bm D$, where each tuple becomes an independent random variable~\cite{DeutchFKM22}. This connection to probabilistic query evaluation (PQE) allows to transfer well-established results from PQE to Shapley value computation. In particular, the tractability of PQE for safe queries~\cite{Suciu:PDB:2011} implies the tractability of Shapley value computation for safe queries. Furthermore, knowledge compilation techniques developed for PQE can be adjusted for Shapley value computation. It is stated as open problem whether PQE also reduces in polynomial time to Shapley value computation, effectively establishing a polynomial-time equivalence between the two problems~\cite{DeutchFKM22}.

Second, the dichotomy for conjunctive queries without self-joins over probabilistic databases~\cite{DalviS04} also holds for Shapley value computation~\cite{LivshitsBKS21}: For any self-join-free Boolean conjunctive query $Q$, the problem of Shapley value computation is in FP if $Q$ is hierarchical and is FP$^{\#\text{P}}$-hard otherwise.
\nop{It remains open whether the hardness of the intractable (unsafe) unions of conjunctive queries carries over from probabilistic databases~\cite{DalviS12} to Shapley values.}


The main result in this paper is a polynomial-time equivalence between the Shapley value computation and model counting for any class of Boolean functions that are closed under substitutions of variables with (possibly empty) disjunctions of fresh variables. This equivalence connects the Shapley value computation to a fundamental and well-established problem~\cite{GomesSS:ModelCounting:2021} with many applications from artificial intelligence to formal verification. This result settles the open problem raised in prior work~\cite{DeutchFKM22}, albeit not using PQE but model counting under OR-substitutions.

We also show two applications of our result. In Section~\ref{sec:circuits} we first show that deterministic and decomposable circuits are closed under OR-substitutions, where we allow further polynomial-time transformations. Since model counting is tractable for such circuits~\cite{DarwicheM:KCMap:2002}, it follows from our main result that Shapley value computation is also tractable for such circuits. Deterministic and decomposable circuits are extensively investigated in knowledge compilation~\cite{DarwicheM:KCMap:2002,DarwicheMSS:KnowledgeCompilation:2017}, prime examples are the ordered binary decision diagrams (OBDDs) and the deterministic decomposable negation normal forms (d-DNNFs).

Our second application is in databases. 
In Section~\ref{sec:queries} we show a polynomial-time equivalence between computing the Shapley value for the tuples contributing to the answer of a Boolean conjunctive query $Q$ and counting the models in the {\em lineage} of $Q$. When lifted to the level of the query, the OR-substitutions can be expressed by {\em stretching} the query, a rewriting which introduces fresh variables in relations.
This equivalence allows us to immediately recover the dichotomy for Shapley value computation in case of self-join-free Boolean conjunctive queries~\cite{LivshitsBKS21}; in particular, the hardness for non-hierarchical queries can now be shown  using a simple reduction from the \#P-hard problem of model counting for lineage in positive bipartite formulas in disjunctive normal form~\cite{ProvanB83}, as previously used to show FP$^{\text{\#P}}$-hardness of PQE~\cite{DalviS04}. 

\paragraph{Shapley value versus SHAP score} Recent works~\cite{BroeckLSS:AAAI:2021,BroeckLSS:SHAP:2022,ArenasBBM:AAAI:2021,ArenasBOS:Neurips:2022,arenas2023complexity} consider the notion of SHAP score, which is based on, yet different from, the Shapley value and used for providing explanations in machine learning. For a given classification model $M$, entity $\bm e$, and feature $x$, the SHAP score intuitively represents the importance of the feature value $\bm e(x)$  to the classification result $M(\bm e)$. In its general formulation, it takes as input a Boolean function $F$ encoding a Boolean classifier and a probability distribution on the set of truth assignments. The probability distribution is assumed to be a {\em product distribution}, also called a {\em fully factorized distribution}, and the wealth function of the SHAP score is an {\em expectation}. In this setting, it was shown that computing the SHAP score is polynomial-time equivalent to {\em weighted} model counting for the function $F$~\cite{BroeckLSS:AAAI:2021,BroeckLSS:SHAP:2022}. These prior works~\cite{ArenasBBM:AAAI:2021,ArenasBOS:Neurips:2022} also show that the SHAP score can be computed in polynomial time in case the Boolean function $F$ is given by a tractable (deterministic and decomposable) circuit. Tractability of such circuits is the main study in knowledge compilation~\cite{DarwicheM:KCMap:2002,DarwicheMSS:KnowledgeCompilation:2017}.


In contrast, we study the Shapley value where the wealth function is just the Boolean function $F$, without any probability distribution. This appears unrelated to the SHAP score, in particular it is not equivalent to setting all probabilities to 1/2. While there exist fully-polynomial randomized approximation schemes (FPRAS) for model counting~\cite{KarpLM:FPRAS:1989} and the Shapley value in the database context~\cite{LivshitsBKS21}, there is no such FPRAS for the SHAP score even in case of positive bipartite DNF functions~\cite{arenas2023complexity}.
Our polynomial-time equivalence is technically more challenging than for the SHAP score discussed in prior work~\cite{BroeckLSS:AAAI:2021,BroeckLSS:SHAP:2022}, because we no longer have the ability to use an oracle with varying probability functions (or, equivalently, weight functions).  Instead, our proof of equivalence relies on the ability to substitute a Boolean variable with a disjunction of fresh variables.

\section{Preliminaries}
\label{sec:prelims}
We use $\mathbb{N}$ to denote the set of natural numbers including $0$.
For $n\in\mathbb{N}$, we denote by $[n] \defeq \set{1,2,\ldots,n}$. In case $n=0$, then $[n]=\emptyset$.

\paragraph{Boolean Functions}
Let $\bm X$ be the set of $n \in \mathbb{N}$ Boolean variables $X_1,\ldots,X_n$. Where convenient, we may denote a variable $X_i$ by its index $i$.
A {\em Boolean function} over 
$n \in \mathbb{N}$ variables is a function 
$F:\set{0,1}^n \rightarrow \set{0,1}$ that uses the logical connectors $\wedge$ (and), $\vee$ (or), and $\neg$ (not). The {\em size} of a function $F$, denoted by $|F|$, is the number of occurrences of variables and of the logical connectors  in $F$.
We denote 
by $\bm{BF}$ the set of all Boolean functions.
For example, $F = X_1 \wedge (X_2 \vee \neg X_3)$ is a function 
over the three variables $X_1$, $X_2$, and $X_3$. 
We identify isomorphic functions, i.e., they are equal up to renaming of variables; $F$ can also be written as $Y_1 \wedge (Y_2\vee \neg Y_3)$.

\paragraph{Substitutions}
Given $n \in \mathbb{N}$,
a {\em substitution} is a function $\theta : [n] \rightarrow \bm{BF}$.
We often denote the substitution $\theta$ by the set $\set{X_1 := \theta(1), \ldots,$ $X_n := \theta(n)}$.
The result of applying the substitution $\theta$ to a Boolean function $F$
is denoted by $F[\theta]$.  
We may define a substitution only on a subset of the variables and assume implicitly that the other variables are mapped to themselves.  
For example, for the above function $F$ and substitution $\theta = \{X_2 := Z_1 \vee Z_2\}$, we have $F[\theta] = X_1 \wedge (Z_1 \vee Z_2 \vee \neg X_3)$.

\begin{definition} \label{def:stretch:c}
A Boolean function $F$ over $n$ variables {\em admits an OR-substitution into} a Boolean function $G$, denoted by  $F\overset{\text{OR}}{\rightarrow}G$, if 
$G = F[\theta]$ with $\theta = \set{X_{i}:= Z_i^1 \vee \ldots \vee Z_i^{m_i}|$ $i \in [n]}$ for 
$m_1, \ldots , m_n \in \mathbb{N}$ and fresh variables $Z_i^1,\ldots,Z_i^{m_i}$.
Notice that $G$ has $\sum_{i = 1}^n  m_i$ variables. If $m_i=0$, then $\theta$ maps $X_i$ to $0$.

Given a class $\calC$ of Boolean functions, we define $\widetilde{\calC} = \set{G | \exists F \in \calC \text{ such that } F\overset{\text{OR}}{\rightarrow}G}$.
We say that $\calC$ {\em OR-substitutes into} $\widetilde{\calC}$.
    \end{definition}

Notice that it holds $\calC \subseteq \widetilde{\calC}$, because we can
substitute each $X_i$ with a single variable $Z_i$ and obtain an isomorphic function.

\paragraph{Valuations}
{\em Valuations} are special substitutions where variables are mapped to constants. Given a valuation $\theta : [n] \rightarrow \set{0,1}$, we denote by $F[\theta]$ the Boolean value of $F$. We say that $\theta$ is a {\em model} of $F$ if $F[\theta] = 1$. It is often convenient to denote the valuation $\theta$ by the set $T \defeq \setof{i \in [n]}{\theta(i)=1}$, in which case we write $F[T]$ for $F[\theta]$. The size of a model $\theta$ is thus the number of variables it sets to 1, i.e., $|T|$. 
For instance, consider the valuation $T = \set{1}$, which for the example function $F = X_1 \wedge (X_2 \vee \neg X_3)$ maps $X_1$ to $1$ and the other two variables $X_2$ and $X_3$ to $0$. Then, $F[\set{X_1}] = 1$, so $T$ is a model of $F$ of size $1$.
Two functions $F_1$ and $F_2$ are {\em equivalent}, denoted by $F_1\equiv F_2$, if $F_1[\theta] = F_2[\theta]$ for all valuations $\theta$.

\paragraph{Model Counting}
Consider a Boolean function $F$ over $n$ variables.  The {\em model count}
$\#F$ is  the number of models of $F$:
\begin{align*}
  \#F \defeq & \sum_{T \subseteq [n]}F[T]
\end{align*}
Given $0\leq k \leq n$,  the {\em $k$-model count} $\#_k F$ is the number of models of $F$ of size $k$:
\begin{align*}
  \#_kF \defeq & \sum_{T \subseteq {[n] \choose k}} F[T]
\end{align*}
where ${[n] \choose k}$ represents the subsets of $[n]$ of size $k$.
We denote the vector of $k$-model counts by:
\begin{align*}
  \#_{0,\ldots,n}F \defeq & (\#_0F, \#_1F, \ldots, \#_nF)
\end{align*}

\paragraph{Shapley value}
Given a Boolean function $F$ over $n$ variables, the {\em Shapley value} of a variable $X_i$ for $i\in[n]$ is defined as:
\begin{align}
  \shap(F,X_i) \defeq & \frac{1}{n!} 
  \sum_{\Pi \in S_n} \left(F[\Pi^{<i} \cup \set{i}]-F[\Pi^{<i}]\right)\label{eq:shap}
\end{align}
where $S_n$ is the symmetric group, i.e., the set of permutations of $[n]$, and $\Pi^{<i}$ is the set of indices $j$ that come before $i$ in the permutation $\Pi$. 
If $i$ is at the first position of $\Pi$, then $\Pi^{<i}$ is the empty set.  
\begin{example}
\label{ex:model_counting_shapley}
Consider again the function 
$F = X_1 \wedge (X_2 \vee \neg X_3)$. The only models of the 
function are $\{X_1\}$, $\{X_1, X_2\}$, and $\{X_1, X_2, X_3\}$. Hence, 
$\#F = 3$, $\#_0F = 0$, and $\#_1F = \#_2F = \#_3F = 1$. The table below 
shows for each possible permutation $\Pi \in S_3$, the difference
$F[\Pi^{<i} \cup \set{i}]-F[\Pi^{<i}]$ for $i \in [3]$.
For instance, in case $\Pi= (2,1,3)$ we have
$\Pi^{<1} \cup \set{1} = \{1,2\}$ and $\Pi^{<1} = \{2\}$.
Hence, $F[\Pi^{<1} \cup \set{1}]-F[\Pi^{<1}] = 1-0 = 1$.
\begin{center}
\begin{tabular}{cccc}
 & \multicolumn{3}{c}{$F[\Pi^{<i} \cup \set{i}]-F[\Pi^{<i}]$}  \\
$\Pi$ & $i = 1$ & $i = 2$ & $i = 3$ \\ \hline
$(1,2,3)$ & $1$ & $0$ & $0$ \\
$(1,3,2)$ & $1$ & $1$ & $-1$ \\
$(2,1,3)$ & $1$ & $0$ & $0$ \\
$(2,3,1)$ & $1$ & $0$ & $0$ \\
$(3,1,2)$ & $0$ & $1$ & $0$ \\
$(3,2,1)$ & $1$ & $0$ & $0$ \\
\end{tabular}
\end{center}
To obtain the Shapley value of variable $X_i$, we sum up the values in the 
column for $i$ and divide by $3!=6$. We obtain 
$\shap(F,X_1) = \frac{5}{6}$, $\shap(F,X_2) = \frac{2}{6}$,
$\shap(F,X_3) = -\frac{1}{6}$. Note that the Shapley value 
of $X_3$ is negative because it appears negatively in the function.
\end{example}

Next, we give an alternative formulation of the Shapley value that uses model counting. 
%
\begin{proposition}[\cite{LivshitsBKS21} page 11, adapted]
\label{prop:alternative_shap}
The {\em Shapley value} of a variable $X_i$ of a Boolean function $F$ is:
\begin{align}
  \shap(F,X_i) = & \sum_{k=0}^{n-1} c_k \left(\#_kF[X_i:=1]-\#_kF[X_i:=0]\right) \label{eq:shap:c}
\end{align}
where $c_k = \frac{k! (n-k-1)!}{n!}$.
\end{proposition}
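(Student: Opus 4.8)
The plan is to regroup the sum over permutations in Equation~\eqref{eq:shap} according to the \emph{set} of indices that precede $i$, rather than the full permutation. The key observation is that the summand $F[\Pi^{<i}\cup\set{i}]-F[\Pi^{<i}]$ depends on $\Pi$ only through the set $S\defeq\Pi^{<i}\subseteq[n]\setminus\set{i}$, not on the relative order of the elements inside or outside $S$. So the first step is a counting argument: for a fixed $S$ with $|S|=k$, I would count how many permutations $\Pi\in S_n$ satisfy $\Pi^{<i}=S$. Such a permutation is obtained by arranging the $k$ elements of $S$ in the positions before $i$ and the remaining $n-k-1$ elements after $i$, giving exactly $k!\,(n-k-1)!$ permutations. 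Substituting this into \eqref{eq:shap} and grouping the resulting sets by their common size $k$ yields
\begin{align*}
\shap(F,X_i) &= \frac{1}{n!}\sum_{S\subseteq[n]\setminus\set{i}} |S|!\,(n-|S|-1)!\,\bigl(F[S\cup\set{i}]-F[S]\bigr) \\
&= \sum_{k=0}^{n-1} \frac{k!\,(n-k-1)!}{n!}\sum_{\substack{S\subseteq[n]\setminus\set{i}\\ |S|=k}}\bigl(F[S\cup\set{i}]-F[S]\bigr).
\end{align*}

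The second step is to recognize each inner sum as a difference of $k$-model counts of the restricted functions. Here I would unfold the definitions: $F[X_i:=1]$ and $F[X_i:=0]$ are Boolean functions over the $n-1$ variables indexed by $[n]\setminus\set{i}$, and for a set $S\subseteq[n]\setminus\set{i}$ of size $k$ one has $F[X_i:=1][S]=F[S\cup\set{i}]$ and $F[X_i:=0][S]=F[S]$ (the valuation $S$ sets the indices in $S$ to $1$, and $i$ together with all remaining indices to $0$). Summing over all $S$ of size $k$ therefore gives exactly $\#_k F[X_i:=1]=\sum_{|S|=k}F[S\cup\set{i}]$ and $\#_k F[X_i:=0]=\sum_{|S|=k}F[S]$. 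Plugging these identities in, and writing $c_k=\frac{k!\,(n-k-1)!}{n!}$, produces Equation~\eqref{eq:shap:c} and closes the argument.

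The argument is essentially pure bookkeeping, so there is no deep obstacle; the only place demanding care is the interface between the two representations of valuations. Specifically, I would be careful that the notation $F[S]$ for $S\subseteq[n]\setminus\set{i}$ consistently means ``set $i$ (and everything outside $S$) to $0$,'' so that it coincides with $F[X_i:=0][S]$ rather than with a valuation that leaves $X_i$ free; and symmetrically that $F[S\cup\set{i}]$ matches $F[X_i:=1][S]$. Getting the index ranges right also matters: since $S$ ranges over subsets of an $(n-1)$-element set, its size $k$ runs from $0$ to $n-1$, which is precisely the summation range in \eqref{eq:shap:c}. Once these conventions are pinned down, the permutation count and the identification of the inner sums combine directly to give the stated formula.
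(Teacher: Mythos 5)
Your proposal is correct and matches the paper's own proof step for step: both regroup the permutation sum in Equation~\eqref{eq:shap} by the set $\Pi^{<i}$, count the $k!\,(n-k-1)!$ permutations realizing each set of size $k$, and then identify the inner sums over sets of size $k$ with $\#_kF[X_i:=1]$ and $\#_kF[X_i:=0]$. Your extra care about the convention that $F[S]$ sets $X_i$ (and everything outside $S$) to $0$ is exactly the identification the paper makes implicitly at its equality~(c).
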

The above formulation does not consider $\#_nF$, since $X_i$ is set to either $1$ or $0$ and $F$ has therefore $n-1$ remaining variables.
\begin{example}
\label{ex:alternative_shapley}
We compute the Shapley value of $X_1$ in 
$F = X_1 \wedge (X_2 \vee \neg X_3)$
using Eq.~\eqref{eq:shap:c}.
We have $F[X_1:=0] = 0 \wedge (X_2 \vee \neg X_3)$.
Since this function cannot evaluate to $1$,
we have $\#_0F[X_1:=0] = \#_1F[X_1:=0] = \#_2F[X_1:=0] = 0$.
It holds $F[X_1:=1] = 1 \wedge (X_2 \vee \neg X_3)$.
The function 
$F[X_1:=1]$ has the models $\emptyset$, $\{X_2\}$, and 
$\{X_2,X_3\}$. 
Hence,
$\#_0F[X_1:=1] = \#_1F[X_1:=1] = \#_2F[X_1:=1] = 1$. 
We have $c_0 = \frac{0!(3-0-1)!}{6} = \frac{2}{6}$, 
$c_1 = \frac{1!(3-1-1)!}{6} = \frac{1}{6}$, and
$c_2 = \frac{2!(3-2-1)!}{6} = \frac{2}{6}$.
Following Eq.~\eqref{eq:shap:c}, we obtain 
$\shap(F,X_1) = \frac{2}{6} + \frac{1}{6} + \frac{2}{6}= \frac{5}{6}$, 
which is the
Shapley value of $X_1$ as computed in Example~\ref{ex:model_counting_shapley}.
\end{example}

The following proposition follows immediately from the definition of the Shapley value:
\begin{proposition}
\label{prop:sum_shap}
For any Boolean function $F$, it holds
\begin{align*}
  \sum_{i\in[n]}\shap(F,X_i) = & F[\bm 1] - F[\bm 0]
\end{align*}
where $\bm 1$ is the valuation that maps all variables to $1$, and
$\bm 0$ the valuation that maps all variables to $0$.
\end{proposition}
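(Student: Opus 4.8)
The plan is to prove this efficiency property of the Shapley value by a direct telescoping argument on the permutation-based definition in Eq.~\eqref{eq:shap}. First I would write out the sum over all variables starting from the definition and immediately swap the two summations, pulling the sum over permutations outside:
\begin{align*}
  \sum_{i\in[n]}\shap(F,X_i) = \frac{1}{n!} \sum_{\Pi \in S_n} \sum_{i\in[n]} \left(F[\Pi^{<i} \cup \set{i}]-F[\Pi^{<i}]\right).
\end{align*}
The whole argument then hinges on analyzing the inner sum for a single fixed permutation $\Pi$.

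The key step is to reindex the inner sum by \emph{position in the permutation} rather than by variable index. Writing $\Pi$ as the ordering $\pi_1,\ldots,\pi_n$ of $[n]$, the variable occupying position $j$ contributes the term $F[\set{\pi_1,\ldots,\pi_j}] - F[\set{\pi_1,\ldots,\pi_{j-1}}]$, since $\Pi^{<\pi_j} = \set{\pi_1,\ldots,\pi_{j-1}}$ and $\Pi^{<\pi_j}\cup\set{\pi_j} = \set{\pi_1,\ldots,\pi_j}$. Summing this over $j=1,\ldots,n$ is a telescoping sum whose value collapses to the last prefix minus the empty prefix, namely $F[[n]] - F[\emptyset] = F[\bm 1] - F[\bm 0]$.

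Finally I would observe that this telescoped value is independent of $\Pi$, so the outer sum contributes the same quantity $n!$ times, and the prefactor $\frac{1}{n!}$ cancels it exactly, yielding $F[\bm 1] - F[\bm 0]$. There is no real analytic obstacle here; the only point requiring care — and the step I would state most explicitly — is the reindexing that exposes the telescoping structure, i.e.\ recognizing that summing the marginal contributions along any single insertion order of a fixed permutation reconstructs the difference between the fully-true and fully-false valuations. Everything else is bookkeeping on the factorials.
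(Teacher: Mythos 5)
Your proof is correct, but it takes a genuinely different route from the paper's. The paper does not argue from the permutation definition in Eq.~\eqref{eq:shap} at all: it starts from the model-counting characterization of Proposition~\ref{prop:alternative_shap}, swaps the sums over $i$ and $k$, applies the counting identities~\eqref{eq:count1} and~\eqref{eq:count2} (which it proves separately, inside the proof of Lemma~\ref{lemma:3}) to rewrite $\sum_i \bigl(\#_kF[X_i:=1]-\#_kF[X_i:=0]\bigr)$ as $(k+1)\#_{k+1}F-(n-k)\#_kF$, and then performs an explicit cancellation of adjacent terms using the factorial identity $c_k(k+1)=c_{k+1}(n-k-1)$, leaving only $c_{n-1}\,n\,\#_nF - c_0\,n\,\#_0F = F[\bm 1]-F[\bm 0]$ (where $\#_nF=F[\bm 1]$ and $\#_0F=F[\bm 0]$ since there is at most one model of each extreme size). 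Your argument is the classical telescoping proof of the efficiency axiom: reindex the inner sum by position $j$ in a fixed permutation $\Pi=(\pi_1,\ldots,\pi_n)$, so that it collapses to $F[\set{\pi_1,\ldots,\pi_n}]-F[\emptyset]=F[\bm 1]-F[\bm 0]$ independently of $\Pi$, after which the $\frac{1}{n!}$ cancels the $n!$ identical outer terms. Your route is more elementary and entirely self-contained — it needs neither Proposition~\ref{prop:alternative_shap} nor the identities~\eqref{eq:count1}--\eqref{eq:count2}, and it works verbatim for non-monotone $F$ since telescoping is indifferent to signs. What the paper's longer derivation buys is coherence with its central framework: it reuses and thereby cross-validates exactly the counting identities that drive the proof of Lemma~\ref{lemma:3}, keeping all Shapley computations in the paper expressed in the $\#_kF$ vocabulary that the main theorem is built on. The one step you rightly flag as needing care — that summing over $i\in[n]$ equals summing over positions $j\in[n]$ because $\Pi$ is a bijection on $[n]$ — is sound, and your identification of $F[[n]]$ with $F[\bm 1]$ and $F[\emptyset]$ with $F[\bm 0]$ matches the paper's set-notation convention for valuations.
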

In the original setting, we have $\sum_{i\in[n]}\shap(F,X_i) = F[\bm 1]$. This does not hold in our case, since $F[\bm 0]$ may not necessarily be 0 as $F$ may have both positive and negative literals. That is, in our setting the {\em efficiency} property ($F[\bm 0]=0$) of the Shapley value~\cite{roth1988shapley} does not hold; it holds for functions where all literals are positive.

\begin{example}
\label{ex:sum_shapley}
For the function 
$F = X_1 \wedge (X_2 \vee \neg X_3)$,
we have $F[\bm 1] = 1$ and $F[\bm 0] =0$,
since $\bm 1$ is a model of $F$ but $\bm 0$
is not. By Proposition~\ref{prop:sum_shap}, the Shapley values of the variables of $F$ must sum up to $1$. This is indeed the case, since
we have $\shap(X_1) = \frac{5}{6}$, 
$\shap(X_2) = \frac{2}{6}$, and 
$\shap(X_3) = -\frac{1}{6}$
(see 
Example~\ref{ex:model_counting_shapley}). 
\end{example}

We write $\shap(F)$ to denote the  vector of the Shapley values of all variables in $F$:
\begin{align*}
  \shap(F) \defeq & (\shap(F,X_1), \ldots, \shap(F,X_n))
\end{align*}

\paragraph{Polynomial-time Reductions and Transformations}
A {\em polynomial-time reduction} (also called a {\em Cook reduction}) from a problem $A$ to a problem $B$,
denoted by $A~\leq^P~B$, is a
polynomial-time algorithm for the problem $A$ with access to an oracle
for the problem~$B$.  If $B~\leq^P~A$ also holds, then we write
$A \equiv^P B$ and say that the two problems are polynomial-time equivalent.
A {\em polynomial-time transformation} from a class of functions $\calC_1$ to another class of functions $\calC_2$, denoted by $\calC_1~\lesssim^P~\calC_2$, is an algorithm $T$ that takes time polynomial in the representation size of functions and such that: 
$\forall F_1\in\calC_1, \exists F_2\in\calC_2: F_2 = T(F_1) \text{ and } F_1$ $\equiv F_2$. If $\calC_2~\lesssim^P~\calC_1$ also holds, then we write $\calC_1~\approx^P~\calC_2$ and say that $\calC_1$ and $\calC_2$ have a bidirectional polynomial-time transformation.

\section{Polynomial-time Reductions for Problems over Boolean Functions}
\label{sec:main-results-functions}

We consider three problems: model counting, fixed-size model counting, and Shapley value computation. They are all parameterized by a class $\calC$ of Boolean functions. 
We show reductions between these problems that take time polynomial in the size of the functions {\em under the assumption} that the OR-substitutions can be computed in time polynomial in the sizes of the function and of the substitution.
In subsequent sections we show two well-known examples where this assumption is met: for deterministic and decomposable circuits (Section~\ref{sec:circuits}) and for query lineage (Section~\ref{sec:queries}).

Given a formula $F \in \calC$ over $n$ variables, the model counting problem asks for the number of models of $F$: 
\vspace*{.5em}

\begin{center}
\fbox{%
    \parbox{0.45\linewidth}{%
    \begin{tabular}{ll}
   Problem: & $\#\calC$ \\
   Description: &  \textit{Model Counting} \\
        Input: & $F \in \calC$\\
        Compute: & $\#F$
    \end{tabular}
    }}%
\end{center}

There is extensive literature on the model counting problem $\#\calC$~\cite{GomesSS:ModelCounting:2021}. We use two examples later in this paper. If $\calC$ is the class of positive, bipartite functions in disjunctive normal form, i.e., functions of the form $F = \bigvee_{(i,j)\in E} (X_i \wedge Y_j)$ where $E$ is a set of pairs $E \subseteq [n] \times [n]$, then $\#\calC$ is \#P-hard~\cite{ProvanB83}. If $\calC$ is the class of deterministic and decomposable Boolean circuits, then  $\#\calC$ is in FP~\cite{DarwicheM:KCMap:2002,DarwicheMSS:KnowledgeCompilation:2017}.

The fixed-size model counting problem asks for the number of models of $F$ of size $k$, 
for any $0\leq k \leq n$: 

\begin{center}
\fbox{%
    \parbox{0.45\linewidth}{%
    \begin{tabular}{ll}
    Problem: & $\#_{*}\calC$ \\  
    Description: & \textit{Fixed-Size Model Counting} \\       
        Input: & $F \in \calC$ over $n$ variables\\
        Compute: & $\#_{0,\ldots,n}F$
    \end{tabular}
    }}%
\end{center}

\vspace{60pt}

The Shapley value computation problem asks for the Shapley value of each variable in $F$: 

\begin{center}
    \fbox{%
    \parbox{0.45\linewidth}{%
    \begin{tabular}{ll}
    Problem: & $\shap(\calC)$ \\  
    Description: & \textit{Shapley Value Computation} \\      
        Input: & $F \in \calC$ \\
        Compute: & $\shap(F)$
    \end{tabular}
    }}%
\end{center}

Our main result gives polynomial-time reductions between the above three problems:

\begin{thm} 
\label{th:main}
  Given a class $\calC$ of Boolean functions, it holds:
\begin{itemize}
\item $\shap(\calC) \leq^P \#_*\widetilde{\calC}$
\vspace*{.25em}
\item $\#_*\calC \leq^P \#\widetilde{\calC}$
\vspace*{.25em}
\item $\#\calC  \leq^P \shap(\widetilde{\calC})$.
\end{itemize}
\end{thm}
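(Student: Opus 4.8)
For the first reduction, $\shap(\calC)\leq^P\#_*\widetilde{\calC}$, I would start from the model-counting formulation of the Shapley value (Proposition~\ref{prop:alternative_shap}), which writes $\shap(F,X_i)$ as a weighted sum of the differences $\#_kF[X_i:=1]-\#_kF[X_i:=0]$ with the known coefficients $c_k$. The function $F[X_i:=0]$ is exactly the OR-substitution of $F$ that sends $X_i$ to the empty disjunction ($m_i=0$) and every other variable to itself, so $F[X_i:=0]\in\widetilde{\calC}$; likewise $F\in\calC\subseteq\widetilde{\calC}$. The counts $\#_kF[X_i:=1]$ are not directly available from the oracle, but splitting the size-$k$ models of $F$ according to whether they contain $i$ gives the identity $\#_kF=\#_kF[X_i:=0]+\#_{k-1}F[X_i:=1]$, hence $\#_kF[X_i:=1]=\#_{k+1}F-\#_{k+1}F[X_i:=0]$. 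Thus one $\#_*\widetilde{\calC}$-oracle call on $F$ together with one on each $F[X_i:=0]$ supplies all the counts needed to evaluate the formula, yielding the reduction with $n+1$ calls.

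For the second reduction, $\#_*\calC\leq^P\#\widetilde{\calC}$, I would use OR-substitution as a counting gadget. Let $G_m=F[\theta_m]$ be obtained by replacing every variable $X_i$ with a disjunction $Z_i^1\vee\cdots\vee Z_i^m$ of $m$ fresh variables, so $G_m\in\widetilde{\calC}$ and has $nm$ variables. A truth assignment to the $Z$'s induces the pattern $x_i=[\text{at least one }Z_i^\bullet\text{ is true}]$, and for a fixed target set $T=\set{i:x_i=1}$ of size $k$ exactly $(2^m-1)^k$ assignments realize it. Summing over $T$ gives $\#G_m=\sum_{k=0}^n\#_kF\,(2^m-1)^k$, a polynomial of degree $n$ in $y=2^m-1$ whose coefficients are the sought counts. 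Evaluating it for $m=0,1,\ldots,n$ produces the $n+1$ distinct nodes $y\in\set{0,1,3,\ldots,2^n-1}$, so the associated Vandermonde system is nonsingular and solvable in polynomial time (all quantities have polynomially many bits), recovering $\#_{0,\ldots,n}F$.

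The third reduction, $\#\calC\leq^P\shap(\widetilde{\calC})$, is the hard and novel direction, since a Shapley oracle never exposes a model count directly: by Proposition~\ref{prop:sum_shap} the only aggregate it offers for free, $\sum_v\shap(G,v)=G[\bm{1}]-G[\bm{0}]$, is Boolean-valued and cannot encode $\#F$. My plan is again to inflate and interpolate. For a parameter $m$, let $G_m$ be the uniform OR-inflation of $F$ as above; by symmetry $\shap(G_m,Z_i^j)$ depends only on the group $i$, and a marginal-contribution argument shows it is nonzero only for permutations in which $Z_i^j$ is the first variable of its group to appear, so that adding it flips the $i$-th disjunction from false to true. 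Expanding the permutation sum (or applying Proposition~\ref{prop:alternative_shap} to $G_m$) and grouping by the induced pattern $\tau$ on the original variables, I would establish $\shap(G_m,Z_i^1)=\frac{1}{(nm)!}\sum_{\ell=0}^{n-1}C(\ell,m)\,A_i(\ell)$, where $A_i(\ell)=\sum_{|\tau|=\ell,\,i\notin\tau}(F[\tau\cup\set{i}]-F[\tau])$ collects the size-$\ell$ marginals of $X_i$ and $C(\ell,m)=\sum_p[x^p]\big((1+x)^m-1\big)^\ell\,p!\,(nm-1-p)!$ is a combinatorial coefficient depending only on $\ell$ and $m$.

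Fixing $i$ and letting $m$ range over $1,\ldots,n$ then turns the oracle outputs into a linear system in the $n$ unknowns $A_i(0),\ldots,A_i(n-1)$. Solving it for every $i$ and summing over $i$ gives $\sum_iA_i(\ell)=(\ell+1)\#_{\ell+1}F-(n-\ell)\#_\ell F$ for each $\ell$; together with the base value $\#_0F=F[\bm{0}]$, computable directly, this recurrence determines all $\#_kF$ and hence $\#F=\sum_k\#_kF$. The main obstacle I expect is proving that the coefficient matrix $[\,C(\ell,m)/(nm)!\,]_{\ell,m}$ is nonsingular over the chosen values of $m$, so that the system is invertible; I plan to establish this through a degree/triangularity analysis of the generating-function coefficients $[x^p]\big((1+x)^m-1\big)^\ell$, which vanish for $p<\ell$, reducing nonsingularity to a Vandermonde-type argument. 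A secondary point to verify is that every $G_m$ has size polynomial in $|F|$ and that all factorials and counts, though enormous, carry only polynomially many bits, so the whole reduction stays within polynomial time.
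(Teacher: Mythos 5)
Your first two reductions are correct and essentially identical to the paper's proofs of Lemmas~\ref{lemma:2} and~\ref{lemma:1}: the same splitting identity $\#_{k+1}F=\#_kF[X_i{:=}1]+\#_{k+1}F[X_i{:=}0]$ with $n+1$ oracle calls, and the same inflation formula $\#F^{(\ell)}=\sum_{k}(2^\ell-1)^k\#_kF$ of Claim~\ref{cl1} solved as a Vandermonde system (your choice of nodes $m=0,\ldots,n$ versus the paper's $\ell\in[n+1]$ is immaterial).

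Your third reduction has a genuine gap, located exactly where you feared, but it is worse than an unproven lemma: for your symmetric gadget the coefficient matrix is provably singular. You inflate \emph{every} group, including group $i$, and query $\shap(G_m,Z_i^1)$. Take $n=2$. The marginal contribution of $Z_1^1$ is nonzero only when its prefix avoids group $1$, so $\shap(G_m,Z_1^1)$ equals $P(\text{prefix}=\emptyset)\,A_1(0)+P(\text{prefix nonempty, inside group }2)\,A_1(1)$. The first probability is $P(Z_1^1 \text{ first overall})=\frac{1}{2m}$; the second is $P(Z_1^1 \text{ first in its group})-P(Z_1^1 \text{ first overall})=\frac{1}{m}-\frac{1}{2m}=\frac{1}{2m}$. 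Hence $\shap(G_m,Z_1^1)=\frac{1}{2m}\left(A_1(0)+A_1(1)\right)$ for \emph{every} inflation level $m$ --- equivalently, $C(0,m)=C(1,m)=(2m-1)!$ in your notation --- so no choice of values of $m$ makes your system invertible, and the step ``solving it for every $i$'' cannot be carried out: e.g., $X_1\wedge X_2$ and $X_1\wedge\neg X_3$ (as functions of $X_1,X_2$ and $X_1,X_3$ respectively, i.e., marginal vectors $(0,1)$ versus $(1,0)$) produce the identical answers $\frac{1}{2m}$ at the queried group for all $m$. Your planned triangularity/Vandermonde analysis of $[x^p]\big((1+x)^m-1\big)^\ell$ therefore cannot succeed for this gadget.

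The paper's Lemma~\ref{lemma:3} avoids this degeneracy with an \emph{asymmetric} substitution: $F^{(\ell,i)}$ replaces $X_i$ by a single fresh variable $Z_i$ and inflates only the other $n-1$ variables. The asymmetry is what breaks the tie: for $n=2$ the rows become $\left(\frac{1}{\ell+1},\frac{\ell}{\ell+1}\right)$, which vary independently with $\ell$. Everything downstream of the linear system in your proposal --- summing the recovered differences over $i$ to get $(k+1)\#_{k+1}F-(n-k)\#_kF$, then unrolling the recurrence from $\#_0F=F[\bm 0]$ --- coincides with the paper's Claim~\ref{claim:2} and its final step, and that part is correct. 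One point in your favour: your insistence on the convolution coefficients $[x^p]\big((1+x)^m-1\big)^\ell$, rather than a blanket $(2^\ell-1)^k$ scaling, is the right level of care, since Claim~\ref{cl1} is a statement about total model counts and does not transfer verbatim to fixed-size counts; but the resulting invertibility question must then be settled for the asymmetric gadget, whereas your symmetric one is beyond repair.
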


In case $\calC$ OR-substitutes to itself, i.e., $\calC = \widetilde{\calC}$, the three problems $\#\calC$, $\#_*\calC$, and $\shap(\calC)$ become polynomial-time equivalent:
\begin{corollary}[Theorem~\ref{th:main}] 
\label{cor:main}
  Given a class $\calC$ of Boolean functions with $\calC = \widetilde{\calC}$, it holds:
$$\shap(\calC)\equiv^P \#_*\calC \equiv^P \#\calC$$
\end{corollary}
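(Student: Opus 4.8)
The corollary is immediate once Theorem~\ref{th:main} is in hand: when $\calC=\widetilde{\calC}$ the three reductions instantiate to $\shap(\calC)\leq^P\#_*\calC$, $\#_*\calC\leq^P\#\calC$, and $\#\calC\leq^P\shap(\calC)$, which close into a cycle and hence give $\shap(\calC)\equiv^P\#_*\calC\equiv^P\#\calC$. So the plan is really to prove the three reductions of Theorem~\ref{th:main}. A fact I would use throughout is that OR-substitutions compose (substituting disjunctions of fresh variables into disjunctions of fresh variables again yields a disjunction of fresh variables), so $\widetilde{\widetilde{\calC}}=\widetilde{\calC}$; together with $\calC\subseteq\widetilde{\calC}$ this guarantees that every function I build by OR-substitution from $F\in\calC$ (or from an OR-substitution of $F$) stays inside $\widetilde{\calC}$, so all oracle calls are legal. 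The common engine is polynomial interpolation: OR-substituting a variable $X_i$ by a disjunction of $m$ fresh variables keeps the induced value of $X_i$ Boolean but attaches to each model a factor counting how the $m$ fresh copies realize ``$X_i=1$'', and varying $m$ produces enough evaluation points.

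For $\#_*\calC\leq^P\#\widetilde{\calC}$ I would track the generating polynomial $P_F(t)=\sum_{k=0}^n\#_kF\,t^k$, whose coefficients are exactly the desired vector $\#_{0,\ldots,n}F$. OR-substituting every variable by a disjunction of $m$ fresh variables yields $G_m\in\widetilde{\calC}$, and a direct count gives $\#G_m=\sum_{T\subseteq[n]}F[T]\,(2^m-1)^{|T|}=P_F(2^m-1)$, since a model with $X_i=1$ is realized by any of the $2^m-1$ nonzero assignments of its $m$ copies and a model with $X_i=0$ only by the all-zero assignment. Calling the $\#\widetilde{\calC}$ oracle for $m=0,1,\ldots,n$ evaluates $P_F$ at the $n+1$ distinct points $0,1,3,\ldots,2^n-1$, and interpolation recovers the coefficients. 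For $\shap(\calC)\leq^P\#_*\widetilde{\calC}$ I would start from Proposition~\ref{prop:alternative_shap}, which needs $\#_kF[X_i:=1]$ and $\#_kF[X_i:=0]$ for every $k$. The count $\#_kF[X_i:=0]$ is obtained directly, as $F[X_i:=0]\in\widetilde{\calC}$ (take $m_i=0$), and querying the oracle on $F$ itself gives $\#_kF$; the identity $\#_kF=\#_kF[X_i:=0]+\#_{k-1}F[X_i:=1]$, which splits models by the value of $X_i$, then yields $\#_kF[X_i:=1]$, after which Eq.~\eqref{eq:shap:c} gives $\shap(F,X_i)$.

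The substantive direction is $\#\calC\leq^P\shap(\widetilde{\calC})$. Here I would fix one variable $X_1$ and, for each $m\in\{1,\ldots,n\}$, build $G^{(m)}\in\widetilde{\calC}$ by OR-substituting only $X_1$ with a disjunction of $m$ fresh copies $Z_1,\ldots,Z_m$ and leaving the other variables alone. The key lemma computes the Shapley value of one copy: writing $d_k:=\#_kF[X_1:=1]-\#_kF[X_1:=0]$ and $\phi(q):=\sum_{k=0}^{n-1}d_k\,(1-q)^k q^{\,n-1-k}$, I claim $\shap(G^{(m)},Z_1)=\int_0^1 q^{\,m-1}\phi(q)\,dq$. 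The reason is that in the permutation definition a fresh copy has a nonzero marginal contribution only when it is the first copy of $X_1$ to appear, in which case its contribution is the discrete derivative $F[b,X_1{=}1]-F[b,X_1{=}0]$ at the partial assignment $b$ of the other variables induced by the prefix; averaging over random orderings (equivalently, i.i.d.\ uniform arrival times) makes the other variables independently true with a common probability whose law, after a change of variable, is uniform on $[0,1]$, producing the stated Beta-type integral. Thus the $n$ oracle answers $\shap(G^{(m)},Z_1)$ are the first $n$ moments of the degree-$(n-1)$ polynomial $\phi$, which determine $\phi$ uniquely and hence all the $d_k$; summing gives $\#F[X_1:=1]-\#F[X_1:=0]=\sum_k d_k$. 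Finally, since $F[X_1:=0]\in\widetilde{\calC}$ has one fewer variable, I would recover $\#F[X_1:=0]$ by recursion and output $\#F=\sum_k d_k+2\,\#F[X_1:=0]$, using $\#F=\#F[X_1:=1]+\#F[X_1:=0]$; the recursion has depth $n$ and bottoms out at the constant $F[\bm 0]$.

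The main obstacle is the key lemma of the last paragraph: correctly identifying the marginal contribution of a fresh disjunct (that only the first-appearing copy matters, collapsing the computation to the derivative of $F$ in $X_1$), carrying out the averaging to the clean moment form, and then arguing invertibility --- that the first $n$ moments pin down a degree-$(n-1)$ polynomial, equivalently that the associated linear system is nonsingular. By comparison, the first two reductions are essentially bookkeeping on top of Proposition~\ref{prop:alternative_shap} and the generating-function identity. I would also keep an eye on the routine but necessary checks that all intermediate integers and rationals have polynomially many bits (the evaluation points $2^m-1$ and the Shapley denominators $N!$ are the places to watch), so that the interpolations and linear solves run in polynomial time.
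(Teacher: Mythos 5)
Your framing of the corollary (the three reductions close into a cycle when $\calC=\widetilde{\calC}$) is exactly the paper's, and your first two reductions coincide with the paper's proofs of Lemmas~\ref{lemma:2} and~\ref{lemma:1}: the same splitting identity relating $\#_{k+1}F$ to $\#_kF[X_i{:=}1]$ and $\#_{k+1}F[X_i{:=}0]$, and the same evaluation of the generating polynomial at points $2^m-1$ followed by Vandermonde inversion (the paper takes $\ell\in[n+1]$ where you take $m\in\{0,\dots,n\}$; immaterial). Your third reduction, the substantive one, takes a genuinely different route. The paper's Lemma~\ref{lemma:3} keeps the distinguished variable as a \emph{single} fresh $Z_i$ and inflates every \emph{other} variable into an $\ell$-fold disjunction, solves a Vandermonde system in the unknowns $c_k\left(\#_kF[X_i{:=}1]-\#_kF[X_i{:=}0]\right)$ as $\ell$ ranges over $[n]$, and then aggregates over all $i$ via Claim~\ref{claim:2}, bootstrapping from $\#_0F=F[\bm 0]$; you instead inflate only the distinguished variable into $m$ copies, read off the Shapley value of one copy, and recurse on $F[X_1{:=}0]$. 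Your key lemma is correct and verifiable exactly: the marginal contribution of $Z_1$ is nonzero only at prefixes containing no other copy, which gives $\shap(G^{(m)},Z_1)=\sum_{k=0}^{n-1}\frac{k!\,(m+n-2-k)!}{(m+n-1)!}\,d_k$ with $d_k=\#_kF[X_1{:=}1]-\#_kF[X_1{:=}0]$, and $\int_0^1 q^{m+n-2-k}(1-q)^k\,dq$ is precisely that Beta coefficient; nonsingularity holds since a polynomial of degree at most $n-1$ whose first $n$ moments on $[0,1]$ vanish satisfies $\int_0^1\phi^2=0$, the Bernstein-type functions $(1-q)^kq^{n-1-k}$ form a basis, and the recursion is legal because OR-substitutions compose, so $\widetilde{\widetilde{\calC}}=\widetilde{\calC}$, as you note. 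As for what each approach buys: the paper's scheme recovers all differences for every $i$ in one pass and avoids recursion, but its displayed identity $\shap(F^{(\ell,i)},Z_i)=\sum_{k=0}^{n-1}(2^\ell-1)^kc_k\left(\#_kF[X_i{:=}1]-\#_kF[X_i{:=}0]\right)$ silently uses the $n$-variable coefficients $c_k$ and size-$k$ counts although $F^{(\ell,i)}$ has $1+(n-1)\ell$ variables --- e.g., for $F=X_1\wedge X_2$, $i=1$, $\ell=2$ it predicts $3/2$ while $\shap(F^{(2,1)},Z_1)=2/3$ --- so that step requires a more careful accounting of the inflated function's Shapley coefficients (and a nonsingularity argument for the resulting non-Vandermonde matrix), which your single-variable inflation sidesteps entirely with an exact moment identity; the price is $n$ recursion levels, i.e., $O(n^2)$ oracle calls rather than the paper's one layer of $n^2$ calls, still polynomial, and your bit-size cautions (the points $2^m-1$ and factorial denominators) are the right ones.
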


This result connects model counting to Shapley value computation. Whenever model counting is tractable for a class $\calC$ of Boolean functions that is closed under OR-substitutions, then the Shapley value computation is also tractable. We give here an immediate example; Sections~\ref{sec:circuits} and~\ref{sec:queries} provide two further examples. 

The class $\calC$ of {\em positive $\beta$-acyclic CNF} functions is trivially closed under OR-substitutions\footnote{The hypergraph of a CNF function has one node per variable and one hyperedge per clause. It is $\beta$-acyclic if there is no cycle in the hypergraph, nor in any sub-hypergraph. Substituting a variable by a disjunction of fresh variables preserves 
the structure of the CNF and of its hypergraph, except for replacing one node by several nodes that all occur in the same hyperedges as the replaced node.}. Furthermore, $\#\calC$ is in FP~\cite{Brault-BaronCM:CountSAT:2015}\footnote{Tractability holds even when removing the restriction on the functions being positive.}. Corollary~\ref{cor:main} then implies that $\shap(\calC)$ is also in FP.

There are two immediate generalizations of Theorem~\ref{th:main}. First, we may allow for polynomial-time transformations to accommodate the OR-substitutions. That is, the polynomial-time equivalence between the two problems holds whenever $\calC \approx^P \widetilde \calC$ holds and not only when $\calC = \widetilde \calC$ holds. Second, we may use substitutions beyond the OR-substitution considered here, such as AND-substitutions (more details are given at the end of Section~\ref{sec:main-results-functions}). \nop{An intriguing topic for future work is to understand under which substitutions our main result holds.}

\subsection{Proof of Theorem \ref{th:main}}
\label{sec:main_proof}

We separate the theorem into three lemmas:

\begin{lmm} \label{lemma:2}
  $\shap(\calC) \leq^P \#_*\widetilde{\calC}$
\end{lmm}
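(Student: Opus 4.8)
The plan is to start from the model-counting reformulation of the Shapley value in Proposition~\ref{prop:alternative_shap}: for each variable $X_i$,
\[
\shap(F,X_i) = \sum_{k=0}^{n-1} c_k\left(\#_kF[X_i:=1]-\#_kF[X_i:=0]\right),
\]
where $c_k = k!(n-k-1)!/n!$ is a rational computable exactly in polynomial time. Hence it suffices to compute, for every $i\in[n]$, the two vectors $\#_{0,\ldots,n-1}F[X_i:=1]$ and $\#_{0,\ldots,n-1}F[X_i:=0]$ of fixed-size model counts of the two single-variable restrictions of $F$, and then to combine them by the weighted sum above.

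The first vector I would obtain directly from the oracle. Setting $X_i:=0$ is exactly an OR-substitution with an empty disjunction: applying $\theta=\set{X_i:=0}\cup\set{X_j:=Z_j \mid j\neq i}$ (empty disjunction for $X_i$, one fresh variable for each other index) yields a function isomorphic to $F[X_i:=0]$, which therefore lies in $\widetilde{\calC}$ and has $n-1$ variables. By the standing assumption that OR-substitutions are polynomial-time computable, I can build this function and query the $\#_*\widetilde{\calC}$ oracle on it to read off $\#_{0,\ldots,n-1}F[X_i:=0]$.

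The second vector cannot be read off directly, and this is the crux. Forcing $X_i:=1$ is \emph{not} an OR-substitution, since any disjunction of fresh variables can still be falsified by setting all of them to $0$. I would therefore obtain the $X_i:=1$ counts indirectly, through a single extra query on $F$ itself (legal since $F\in\calC\subseteq\widetilde{\calC}$) returning $\#_{0,\ldots,n}F$, together with the decomposition identity
\[
\#_kF = \#_kF[X_i:=0] + \#_{k-1}F[X_i:=1],
\]
which holds because a size-$k$ model $T$ of $F$ either omits $i$ (and is then a size-$k$ model of $F[X_i:=0]$) or contains $i$ (and $T\setminus\set{i}$ is then a size-$(k-1)$ model of $F[X_i:=1]$). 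Rearranging yields $\#_{k-1}F[X_i:=1]=\#_kF-\#_kF[X_i:=0]$, so every entry of the second vector is a difference of two numbers already in hand.

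Putting it together, the reduction issues $n+1$ oracle calls (one on $F$, and one on each $F[X_i:=0]$), performs only polynomially many exact rational arithmetic operations to recover the $\#_{k}F[X_i:=1]$ and to evaluate the Shapley sum, and thus runs in polynomial time under the stated assumption. I expect the only genuinely non-routine point to be recognizing that the $X_i:=1$ restriction, which is not itself an OR-substitution, can be eliminated in favor of the $X_i:=0$ restriction via the decomposition identity; everything else is bookkeeping.
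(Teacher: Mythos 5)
Your proposal is correct and follows essentially the same route as the paper's proof: the paper uses the identity $\#_{k+1}F = \#_kF[X_i{:=}1] + \#_{k+1}F[X_i{:=}0]$ (your decomposition identity with the index shifted) to eliminate the $X_i:=1$ counts, realizes $F[X_i:=0]$ via the empty-disjunction OR-substitution with fresh variables elsewhere, and obtains the counts for $F$ itself through an isomorphic member of $\widetilde{\calC}$, exactly as you do. The only (immaterial) difference is bookkeeping: you batch a single oracle call on $F$ for all $i$, while the paper describes the two substituted functions per variable.
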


\begin{lmm} \label{lemma:1}
  $\#_*\calC \leq^P \#\widetilde{\calC}$
\end{lmm}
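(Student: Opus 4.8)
The plan is to recover the entire vector $\#_{0,\ldots,n}F$ by viewing the fixed-size model counts as the coefficients of a single degree-$n$ polynomial, and then to evaluate that polynomial at $n+1$ distinct points, where each evaluation is obtained by one oracle call to $\#\widetilde{\calC}$. The one idea we need is that an OR-substitution into $m$ fresh variables inflates the number of models in a way that depends only on how many original variables are set to $1$.

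First I would fix an integer $m \geq 1$ and apply the \emph{uniform} OR-substitution $\theta_m = \setof{X_i := Z_i^1 \vee \cdots \vee Z_i^m}{i \in [n]}$ to the input $F \in \calC$, obtaining $G_m \defeq F[\theta_m] \in \widetilde{\calC}$. This $G_m$ has $nm$ variables and size polynomial in $|F|$ and $m$. Next I would establish the key counting identity. Fix a target valuation $T \subseteq [n]$ of the original variables and count the valuations of the fresh variables that induce it: for each $i \notin T$ the disjunction $Z_i^1 \vee \cdots \vee Z_i^m$ must be $0$, which forces all $m$ copies to $0$ (one way), while for each $i \in T$ it must be $1$, i.e. at least one copy is $1$ ($2^m-1$ ways). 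Since $G_m$ evaluates to $F[T]$ on any such valuation, summing over all $T$ gives
\[
  \#G_m \;=\; \sum_{T \subseteq [n]} F[T]\,(2^m-1)^{|T|} \;=\; \sum_{k=0}^n \#_k F \,(2^m-1)^{k}.
\]

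Then I would read the right-hand side as the polynomial $P(y) = \sum_{k=0}^n \#_k F \cdot y^{k}$ evaluated at $y = 2^m - 1$, whose coefficients are exactly the quantities we want. Taking $m = 1, 2, \ldots, n+1$ produces the $n+1$ \emph{distinct} nodes $y = 1, 3, 7, \ldots, 2^{n+1}-1$, and the values $P(y) = \#G_m$ at these nodes are computed with $n+1$ oracle calls to $\#\widetilde{\calC}$. Because the nodes are distinct, the resulting linear system is Vandermonde and hence invertible, so solving it recovers $(\#_0 F, \ldots, \#_n F)$ uniquely.

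The remaining point is to verify polynomial time: each $G_m$ has polynomial size, there are only $n+1$ oracle calls, and the interpolation is exact arithmetic on a Vandermonde system whose entries and whose unknowns $\#_k F \leq 2^n$ all have polynomially many bits. The substantive step is the counting identity; the only thing to be careful about is that the uniform OR-substitution produces precisely the factor $(2^m-1)^{|T|}$ (so that the $\#_k F$ appear as coefficients) and that $n+1$ distinct nodes are enough to pin down a degree-$n$ polynomial — both of which follow directly once the preimage count above is in place. In contrast to the reduction for $\shap$, here I must keep all of $\#_0 F, \ldots, \#_n F$, including $\#_n F$, which is why I use $n+1$ rather than $n$ evaluation points.
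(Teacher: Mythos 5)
Your proposal is correct and takes essentially the same route as the paper: the same uniform OR-substitution of every variable by a disjunction of $\ell$ fresh variables, the same counting identity $\#F^{(\ell)} = \sum_{k=0}^{n}(2^\ell-1)^k\,\#_kF$ proved by the same preimage-counting argument, and the same Vandermonde system at the $n+1$ distinct nodes $2^\ell-1$, $\ell \in [n+1]$. Your explicit check that the entries and the unknowns $\#_kF \leq 2^n$ have polynomially many bits is a detail the paper leaves implicit (it only cites the invertibility of the Vandermonde matrix), but it changes nothing in the argument.
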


\begin{lmm} \label{lemma:3}
  $\#\calC \leq^P \shap(\widetilde{\calC})$.
\end{lmm}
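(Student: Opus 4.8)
The plan is to prove Lemma~\ref{lemma:3}, i.e.\ $\#\calC \leq^P \shap(\widetilde\calC)$, by showing that a single call to a $\shap(\widetilde\calC)$ oracle can recover the model count $\#F$ of an arbitrary $F\in\calC$. The starting point is Proposition~\ref{prop:sum_shap}: for \emph{any} Boolean function $G$ we have $\sum_{i}\shap(G,X_i) = G[\bm 1]-G[\bm 0]$. By itself this only gives one scalar, so the idea is to apply this identity not to $F$ but to a family of OR-substituted functions $G\in\widetilde\calC$ whose ``all-ones minus all-zeros'' value encodes the model counts of $F$. The reduction must therefore (i) build suitable $G$'s from $F$ via OR-substitutions, (ii) call the $\shap$ oracle on each and sum the returned vector, and (iii) solve a small linear system to extract $\#F$.

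The key construction is to replace every variable $X_i$ of $F$ by a disjunction $Z_i^1\vee\cdots\vee Z_i^{m}$ of $m$ fresh variables, using the \emph{same} multiplicity $m$ for all $n$ variables; call the result $G_m \defeq F[\theta_m]\in\widetilde\calC$. Evaluating $G_m$ at the all-ones assignment sets every $X_i$ to $1$, so $G_m[\bm 1]=F[\bm 1]$; likewise $G_m[\bm 0]=F[\bm 0]$, which tells us nothing new. The useful observation is instead about the \emph{model count} of $G_m$: a model of $G_m$ is a choice, for each $i$, of a subset $S_i\subseteq\{Z_i^1,\dots,Z_i^m\}$, and the value of $G_m$ depends only on which of these subsets are nonempty, i.e.\ on the induced assignment $T=\{i : S_i\neq\emptyset\}$ to the original variables. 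For each original model $T$ of $F$, the number of extending models of $G_m$ is $(2^m-1)^{|T|}\cdot 1^{\,n-|T|}$, since each $X_i$ set to $1$ can be realized by any of the $2^m-1$ nonempty subsets and each $X_i$ set to $0$ by the single empty subset. Hence
\begin{equation}
\#G_m \;=\; \sum_{k=0}^{n} \#_kF \,\bigl(2^m-1\bigr)^{k}. \label{eq:modelcount-Gm}
\end{equation}

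This is the crux: \eqref{eq:modelcount-Gm} is a polynomial in the quantity $x_m\defeq 2^m-1$ whose coefficients are exactly the fixed-size model counts $\#_kF$, and $\sum_k \#_kF = \#F$ is what we want (in fact we recover the whole vector $\#_{0,\dots,n}F$, which also reproves Lemma~\ref{lemma:1} in spirit). So if we can obtain $\#G_m$ for the $n+1$ distinct values $m=0,1,\dots,n$, we get a nonsingular Vandermonde-type system (the nodes $x_0=0,x_1=1,x_3=3,\dots$ are distinct) that we solve in polynomial time for the $\#_kF$, and then sum. It remains to get $\#G_m$ itself from the $\shap$ oracle. Here I would run the same substitution argument one level deeper: apply Proposition~\ref{prop:sum_shap} to a further OR-substitution of $G_m$, or more directly, relate $\#G_m$ to $\sum_i \shap(\cdot)$ by noting that model counting of a function $H$ can be extracted from Shapley values of a modified function whose all-ones/all-zeros gap is shifted. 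Concretely, since the oracle returns $\shap(G)$ for $G\in\widetilde\calC$ and $\sum_i\shap(G,X_i)=G[\bm 1]-G[\bm 0]$, I would engineer $G$ so that this telescoping gap equals a weighted model count; stretching $F$ and reading off the summed Shapley vector is the mechanism.

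The main obstacle I anticipate is the last linkage: Proposition~\ref{prop:sum_shap} delivers $G[\bm 1]-G[\bm 0]$, which is a \emph{difference of two single evaluations}, not a model count, so I cannot naively equate $\sum_i\shap(G,X_i)$ with $\#G$. The real leverage of the $\shap$ oracle over $\widetilde\calC$ is that it returns $\shap$ for every OR-substituted instance, and the Shapley value \eqref{eq:shap:c} is itself a signed combination of fixed-size model counts $\#_kF[X_i{:=}1]-\#_kF[X_i{:=}0]$. The correct plan is thus to invert this relation: by calling $\shap$ on the stretched functions $G_m$ for enough values of $m$, the returned numbers are linear combinations of the $\#_k$'s of residual functions, and the system of these combinations across the different stretch factors is invertible. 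Making the coefficient matrix provably nonsingular (again a Vandermonde/binomial-coefficient argument) and verifying that $O(n^2)$ oracle calls and a polynomial-size linear solve suffice is where the genuine work lies; the combinatorial identity \eqref{eq:modelcount-Gm} is the engine that guarantees invertibility.
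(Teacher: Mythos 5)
There is a genuine gap, and you half-identify it yourself. Your engine identity $\#G_m=\sum_{k=0}^n(2^m-1)^k\,\#_kF$ is correct, but it is exactly Claim~\ref{cl1}, the engine of Lemma~\ref{lemma:1}: it converts \emph{counting} oracle answers on stretched functions into the vector $\#_{0,\ldots,n}F$. For Lemma~\ref{lemma:3} the oracle returns Shapley values, not counts, and your proposal never actually constructs the bridge. Both mechanisms you suggest fail or are left undone: Proposition~\ref{prop:sum_shap} applied to any OR-substituted $G$ only ever yields $G[\bm 1]-G[\bm 0]=F[\bm 1]-F[\bm 0]$ (as you note, ``nothing new''), and calling $\shap$ on the uniformly stretched $G_m$ returns values $\shap(G_m,Z_i^j)$ whose expansion via Eq.~\eqref{eq:shap:c} involves fixed-size counts of \emph{residuals of $G_m$} in the blown-up variable space of $nm$ variables; relating those to the $\#_kF$ requires convolution-type coefficients (how many size-$s$ subsets of the $Z$'s induce a given size-$k$ assignment of the $X$'s), not the clean powers $(2^m-1)^k$ of your identity. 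Asserting that ``the system \ldots is invertible'' without exhibiting the coefficient matrix is precisely the missing proof.

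The paper closes this gap with two ideas absent from your sketch. First, an \emph{asymmetric} stretching: $F^{(\ell,i)}$ keeps the target variable as a \emph{single} fresh variable $Z_i$ and stretches only the other $n-1$ variables by $\ell$. Then the one number $\shap(F^{(\ell,i)},Z_i)$ is a linear combination of exactly the $n$ unknowns $\Gamma_kF=c_k\left(\#_kF[X_i{:=}1]-\#_kF[X_i{:=}0]\right)$, $k=0,\ldots,n-1$, with coefficients depending only on $k$ and $\ell$, and varying $\ell\in[n]$ gives an invertible (Vandermonde-type) system per variable $X_i$. Second, and this is the step your proposal conflates away entirely: solving that system yields only \emph{differences} $\#_kF[X_i{:=}1]-\#_kF[X_i{:=}0]$, whereas your linear system has the $\#_kF$ themselves as unknowns; these are not the same system, and no matrix inversion turns the former into the latter for a single $i$. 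The paper sums the differences over all $i\in[n]$ and uses the combinatorial identities $\sum_{i}\#_kF[X_i{:=}1]=(k+1)\#_{k+1}F$ and $\sum_{i}\#_kF[X_i{:=}0]=(n-k)\#_kF$ (Claim~\ref{claim:2}) to obtain the quantities $(k+1)\#_{k+1}F-(n-k)\#_kF$, and then bootstraps inductively from the directly computable base case $\#_0F=F[\bm 0]$ to recover all $\#_kF$ and hence $\#F$. Without the asymmetric substitution, the differences-to-counts recurrence, and the base case, your reduction does not go through; these are the actual content of the paper's proof, not routine verification.
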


\begin{proof}[Proof of Lemma~\ref{lemma:2}]
    Let $F\in \calC$ be a Boolean function. Our goal is to compute $\shap(F)$ in polynomial time, given an oracle for $\#_*\widetilde{\calC}$. 
 We use Eq.~\eqref{eq:shap:c} for the Shapley value and the following equality:
    $$\#_{k+1}F = \#_kF[X_i:=1] + \#_{k+1}F[X_i:=0]$$
  Then, Eq.~\eqref{eq:shap:c} becomes:
  \begin{align*}
    \shap(F,X_i) = & \sum_{k=0}^{n-1}c_k\left(\#_{k+1}F-\#_{k+1}F[X_i:=0]-\#_kF[X_i:=0]\right)
  \end{align*}
  Consider the function $\widetilde{F}$ that results from $F$ by replacing each variable by a fresh variable and the function $\widetilde{F}'$ that results from $F$ by replacing $X_i$ by the empty disjunction and each other variable by a fresh variable. 
  Clearly, $F$ admits OR-substitutions into 
  $\widetilde{F}$ and $\widetilde{F}'$, hence, $\widetilde{F},\widetilde{F}' \in \widetilde{\calC}$. The functions $\widetilde{F}$ and $\widetilde{F}'$ are isomorphic (i.e., identical up to renaming of the variables) to $F$ and respectively $F[X_i:=0]$, so model counting and fixed-size model counting is the same for $F$ and $\widetilde{F}$, and also for $F[X_i:=0]$ and $\widetilde{F}'$.
  We thus have access to an oracle to compute the quantities $\#_{k+1}F$, $\#_{k+1}F[X_i:=0]$, and $\#_kF[X_i:=0]$.  This means that we can compute $\shap(F,X_i)$ in polynomial time. 
\end{proof}


\begin{proof}[Proof of Lemma~\ref{lemma:1}] 
Let $F \in\calC$ be a Boolean function over the variables 
$\bm X = \set{X_1, \ldots, X_n}$.  Our goal
  is to compute $\#_{0,\ldots,n}(F)$ in polynomial time, given an oracle 
  for $\#\widetilde{\calC}$.  For a valuation
  $\theta : \bm X \rightarrow \set{0,1}$, we write $|\theta|$ for the
  number of variables $X_i$ s.t. $\theta(X_i) = 1$.  It follows:
  \begin{align*}
    \#_kF = & \sum_{\theta: |\theta|=k} F[\theta]
  \end{align*}
for $0 \leq k \leq n$.
  For each $\ell \in \mathbb{N}$, define:
  \begin{align*}
    F^{(\ell)} \defeq & F[X_1 := \bigvee_{j=1}^{\ell}Z_{1}^j, \ldots, X_n := \bigvee_{j=1}^{\ell}Z_{n}^j]
  \end{align*}
  where each $Z_{i}^j$ with  
  $i\in [n]$ and $j \in [\ell]$ is a fresh variable.
  It holds  $F^{(\ell)} \in \widetilde{\calC}$.
  Therefore,  we have access
  to an oracle for computing $\#F^{(\ell)}$.  We claim:
\begin{claim}
\label{cl1}
For each $\ell \in \mathbb{N}$, it holds:
  \begin{align}
\#F^{(\ell)} = & \sum_{k=0}^n (2^\ell-1)^k \#_kF \label{eq:claim:1}
  \end{align}
\end{claim}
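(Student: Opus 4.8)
The plan is to prove the identity by a direct counting argument that partitions the models of $F^{(\ell)}$ according to the valuation they induce on the original variables $X_1,\ldots,X_n$. The function $F^{(\ell)}$ is defined over the $n\ell$ fresh variables $Z_i^j$ with $i \in [n]$ and $j \in [\ell]$, so a model of $F^{(\ell)}$ is an assignment $\sigma$ to these variables. First I would observe that every such $\sigma$ induces a valuation $\theta_\sigma : [n] \to \{0,1\}$ of the original variables by setting $\theta_\sigma(i) = 1$ exactly when at least one copy $Z_i^j$ is set to $1$ under $\sigma$; this is precisely the value taken by the disjunction $\bigvee_{j=1}^\ell Z_i^j$ that was substituted for $X_i$. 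Consequently $F^{(\ell)}[\sigma] = F[\theta_\sigma]$, so $\sigma$ is a model of $F^{(\ell)}$ if and only if $\theta_\sigma$ is a model of $F$.

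The key step is then to count, for a fixed valuation $\theta$, how many assignments $\sigma$ satisfy $\theta_\sigma = \theta$. The $n$ blocks $\set{Z_i^1,\ldots,Z_i^\ell}$ are pairwise disjoint, so the choices in different blocks are independent and the count factorizes over $i \in [n]$. For a block with $\theta(i) = 1$ the disjunction must be true, i.e.\ not all $\ell$ copies are $0$, which leaves $2^\ell - 1$ admissible assignments of that block; for a block with $\theta(i) = 0$ all $\ell$ copies must be $0$, which is a single assignment. Hence the number of $\sigma$ with $\theta_\sigma = \theta$ equals $(2^\ell-1)^{|\theta|}$, where $|\theta|$ is the number of variables $\theta$ sets to $1$.

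Combining the two observations, I would write
\begin{align*}
\#F^{(\ell)} = \sum_\sigma F^{(\ell)}[\sigma] = \sum_\theta (2^\ell-1)^{|\theta|}\, F[\theta],
\end{align*}
and finally regroup the sum over valuations $\theta$ by their size $k = |\theta|$, using $\#_kF = \sum_{\theta : |\theta|=k} F[\theta]$, to obtain $\sum_{k=0}^n (2^\ell-1)^k \#_kF$, as claimed.

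I do not expect a genuine obstacle here: the whole argument rests on the single elementary fact that a disjunction of $\ell$ fresh Boolean variables is satisfied by exactly $2^\ell-1$ of its $2^\ell$ assignments, together with the independence of the blocks. The only point deserving mild care is the boundary behaviour when $\ell = 0$, where every $X_i$ is replaced by the empty disjunction $0$: there $2^\ell - 1 = 0$, and with the convention $0^0 = 1$ the right-hand side collapses to $\#_0F = F[\bm 0]$, which matches $\#F^{(0)}$ since $F^{(0)}$ has no variables and equals $F[\bm 0]$.
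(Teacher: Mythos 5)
Your proposal is correct and follows essentially the same route as the paper: both define the valuation of the original variables induced by an assignment to the fresh variables $Z_i^j$, observe that each valuation $\theta$ of the $X_i$'s has exactly $(2^\ell-1)^{|\theta|}$ preimages by independence of the blocks, and regroup the sum over models by $|\theta| = k$. Your extra remark on the $\ell = 0$ boundary case (empty disjunctions, with $0^0 = 1$) is a small but sound addition that the paper leaves implicit.
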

  Claim~\ref{cl1} implies 
  Lemma~\ref{lemma:1} as follows. 
  We use Eq.~\eqref{eq:claim:1} for 
  $\ell \in [n+1]$ to form a
   system of $n+1$ linear equations with the 
  $n+1$ unknowns 
  $\#_0F, \ldots , \#_{n}F$. 
  The matrix of this system is a
  Vandermonde $(n+1)\text{-by-}(n+1)$ matrix, which is non-singular so we can compute its inverse~\cite{GolubVanLoan:Matrix:1996}.
  Hence, we can solve the linear system
  \begin{equation*}
        \underbrace{
      \begin{pmatrix}
      \#F^{(1)} \\ 
      \vdots \\
      \#F^{(n+1)} \\ 
      \end{pmatrix}
      }_{\text{known}}
    =    
        \underbrace{
      \begin{pmatrix}
      1& (2^1-1)^1 & \cdots & (2^1-1)^n \\ 
      & \vdots & \vdots & \vdots \\
      1& (2^{n+1}-1)^1 & \cdots & (2^{n+1}-1)^n \\ 
      \end{pmatrix}
      }_{\text{Vandermonde}}
      \underbrace{
      \begin{pmatrix}
      \#_0F \\ 
      \vdots \\
      \#_nF \\ 
      \end{pmatrix}
      }_{\text{unknown}}
  \end{equation*}
  and determine the values of $\#_0F, \ldots , \#_{n}F$ in polynomial time.  

  It remains to prove Claim~\ref{cl1}. 
  Let $\bm Z = \{Z_{i}^j | i\in [n]$
  and $j \in [\ell]\}$ be
  the set of variables of $F^{(\ell)}$.  By definition, it holds:
  \begin{align}
  \label{eq:subst_count}
    \#F^{(\ell)}=&\sum_{\varphi: \bm Z \rightarrow \set{0,1}}F^{(\ell)}[\varphi]
  \end{align}
  For each valuation $\varphi : \bm Z \rightarrow \set{0,1}$ we define
  the {\em induced valuation}
  $\theta_\varphi : \bm X \rightarrow \set{0,1}$ by setting
  $\theta_\varphi(X_i) = \varphi(\bigvee_{j=1}^{\ell}Z_{i}^j)$.  In other
  words, $\theta_\varphi(X_i)=1$ iff $\varphi$ evaluates
  $Z_{i}^1 \vee \cdots \vee Z_{i}^{\ell}$ to $1$.  Notice
  that:
  \begin{align}
    \forall \varphi:\bm Z \rightarrow \set{0,1},\ \ \  F^{(\ell)}[\varphi] = & F[\theta_\varphi] \label{eq:fresh_valuation1}\\
    \forall \theta: \bm X \rightarrow \set{0,1},\ \ \ |\setof{\varphi}{\theta_\varphi=\theta}|=&(2^\ell-1)^{|\theta|}
  \label{eq:fresh_valuation2}
  \end{align}
 We group each valuation $\varphi$ in 
  Eq.~\eqref{eq:subst_count} by its induced valuation $\theta_\varphi$:
    \begin{align*}
    \#F^{(\ell)}=&\sum_{\theta:\bm X \rightarrow \set{0,1}}\ \sum_{\varphi: \theta_\varphi = \theta}F^{(\ell)}[\varphi]\\
=& \sum_{\theta:\bm X \rightarrow \set{0,1}}\ \sum_{\varphi: \theta_\varphi = \theta}F[\theta]\TAB\TAB\TAB &\text{(by Eq.~\eqref{eq:fresh_valuation1})}\\
=& \sum_{\theta:\bm X \rightarrow \set{0,1}}(2^\ell-1)^{|\theta|}F[\theta]\TAB\TAB\TAB &\text{(by Eq.~\eqref{eq:fresh_valuation2})}\\
=& \sum_{k=0}^n \ \  \ \sum_{\theta:\bm X \rightarrow \set{0,1}: |\theta|=k}(2^\ell-1)^{k}F[\theta]\\
=& \sum_{k=0}^n(2^\ell-1)^{k}\#_kF
    \end{align*}
    This completes the proof of Claim~\ref{cl1}, which implies Lemma~\ref{lemma:1}.
\end{proof}

\begin{proof}[Proof of Lemma~\ref{lemma:3}] 
Let $F \in C$ be a Boolean function.
Our goal is to compute $\#F$ in polynomial time given an oracle to 
 $\shap(\widetilde{\calC})$.
Suppose $F$ has $n$ variables $\bm X = \set{X_1, \ldots, X_n}$.
  We fix $\ell \in \mathbb{N}$.
  For each variable $X_i$, let $F^{(\ell,i)}$ be the function
  obtained from $F$ by substituting $X_i$ with a fresh variable $Z_i$ 
  and every other variable $X_p$ with a disjunction of fresh variables
  $X_p := Z_{p}^{1} \vee \cdots \vee Z_{p}^{\ell}$.  
  The function $F$ admits OR-substitutions into $F^{(\ell,i)}$, hence,
  $F^{(\ell,i)} \in \widetilde{\calC}$.
  Using the  oracle for $\shap(\widetilde{\calC})$, we compute $\shap(F^{(\ell,i)},Z_i)$.  Then, using
  Eq.~\eqref{eq:shap:c} for the Shapley value and Eq.~\eqref{eq:claim:1}, we obtain:
  \begin{align*}
    \shap(F^{(\ell,i)},Z_i) = & \sum_{k = 0}^{n-1} c_k\left(\#_kF^{(\ell,i)}[Z_i:=1]-\#_kF^{(\ell,i)}[Z_i:=0]\right)\\
    = & \sum_{k=0}^{n-1}(2^\ell-1)^kc_k\left(\#_kF[X_i:=1]-\#_kF[X_i:=0]\right)
  \end{align*}
  Keeping $i$ fixed, we let $\ell$ iterate over $[n]$ to 
  form a system of $n$ equations with $n$ unknowns
  $\Gamma_kF \defeq c_k\left(\#_kF[X_i:=1]-\#_kF[X_i:=0]\right)$, 
  $k \in \{0, \ldots , n-1\}$. 

   \begin{equation*}
        \underbrace{
      \begin{pmatrix}
      \shap(F^{(1,i)},Z_i) \\ 
      \vdots \\
      \shap(F^{(n,i)},Z_i) \\ 
      \end{pmatrix}
      }_{\text{known}}
    =    
        \underbrace{
      \begin{pmatrix}
      (2^1-1)^0 & \cdots & (2^1-1)^{n-1} \\ 
      \vdots & \vdots & \vdots \\
      (2^n-1)^0 & \cdots & (2^n-1)^{n-1} \\ 
      \end{pmatrix}
      }_{\text{Vandermonde}}
      \underbrace{
      \begin{pmatrix}
      \Gamma_0F \\ 
      \vdots \\
      \Gamma_{n-1}F \\ 
      \end{pmatrix}
      }_{\text{unknown}}
  \end{equation*}

  The matrix of the equation system is a Vandermonde matrix, hence, nonsingular.
  We solve the system, and, since the constants $c_k$ are known and computable in polynomial time,
  we obtain all differences $\#_kF[X_i:=1]-\#_kF[X_i:=0]$.  We next show how to compute $\#F$ using these
  differences.  Let us keep $k$ fixed and sum these differences for $i \in [n]$.  We claim:
\begin{claim}
  \label{claim:2}
  For any $k \in \{0, \ldots , n-1\}$, it holds: 
  \begin{align*}
    \sum_{i=1}^{n}\left(\#_kF[X_i:=1]-\#_kF[X_i:=0]\right) =& (k+1)\#_{k+1}F - (n-k)\#_kF
  \end{align*}
  \end{claim}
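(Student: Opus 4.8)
The plan is to prove the identity by a double-counting argument, reinterpreting each substituted count as a count of models of $F$ of a fixed size over the full variable set $[n]$. First I would unfold the definitions of the two terms. Setting $X_i:=1$ leaves the $n-1$ variables $[n]\setminus\set{i}$, and a size-$k$ model of $F[X_i:=1]$ is a set $T\subseteq[n]\setminus\set{i}$ with $|T|=k$ and $F[T\cup\set{i}]=1$; equivalently it is a model $S=T\cup\set{i}$ of $F$ with $i\in S$ and $|S|=k+1$. Hence $\#_kF[X_i:=1]$ equals the number of models $S$ of $F$ with $i\in S$ and $|S|=k+1$. Symmetrically, a size-$k$ model of $F[X_i:=0]$ is a set $S\subseteq[n]\setminus\set{i}$ with $|S|=k$ and $F[S]=1$, so $\#_kF[X_i:=0]$ equals the number of models $S$ of $F$ with $i\notin S$ and $|S|=k$.

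Next I would sum each reinterpreted quantity over $i\in[n]$ and exchange the order of summation, counting pairs $(i,S)$. For the first term, $\sum_{i=1}^n\#_kF[X_i:=1]$ counts pairs $(i,S)$ where $S$ is a model of $F$ with $|S|=k+1$ and $i\in S$; every such model of size $k+1$ contributes exactly $k+1$ pairs, one per element $i\in S$, giving $(k+1)\#_{k+1}F$. For the second term, $\sum_{i=1}^n\#_kF[X_i:=0]$ counts pairs $(i,S)$ where $S$ is a model of $F$ with $|S|=k$ and $i\notin S$; every such model of size $k$ contributes exactly $n-k$ pairs, one per element $i\in[n]\setminus S$, giving $(n-k)\#_kF$. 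Subtracting the two sums yields the claimed identity $(k+1)\#_{k+1}F-(n-k)\#_kF$.

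The only real subtlety — the place I would be most careful — is the bookkeeping in the first reinterpretation: the relevant models of $F$ have size $k+1$ rather than $k$, because adjoining the element $i$ to the size-$k$ model $T$ of $F[X_i:=1]$ increases the model size by one. Getting this $+1$ shift right is exactly what produces the coefficient $(k+1)$ multiplying $\#_{k+1}F$, and conflating it with the $X_i:=0$ case (where no such shift occurs) would break the identity. Beyond this indexing care, the argument is a routine interchange of summation, so I expect no deeper obstacle.
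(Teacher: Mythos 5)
Your proof is correct and takes essentially the same route as the paper: the paper also splits the claim into the two identities $\sum_{i=1}^n \#_kF[X_i:=1]=(k+1)\#_{k+1}F$ and $\sum_{i=1}^n \#_kF[X_i:=0]=(n-k)\#_kF$, establishing each by exactly your double-counting of pairs of a model and a distinguished index, with the same care about the size shift from $k$ to $k+1$ in the $X_i:=1$ case. There is nothing to add.
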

  Claim~\ref{claim:2} follows from the following two equalities:
  \begin{align}
    \sum_{i=1}^n\#_kF[X_i:=1] =& (k+1)\#_{k+1}F \label{eq:count1}\\
    \sum_{i=1}^{n}\#_kF[X_i:=0] =& (n-k)\#_kF \label{eq:count2}
  \end{align}
    Equality~\eqref{eq:count1} holds as follows:
  \begin{align*}
    \sum_{i=1}^{n}\#_kF[X_i:=1] = & \sum_{i=1}^{n}\ \sum_{\theta:\bm X-\set{X_i}\rightarrow \set{0,1};|\theta|=k}F[\set{X_i:=1}\cup\theta]\\
= & \sum_{i=1}^{n}\ \sum_{\varphi:\bm X\rightarrow \set{0,1};|\varphi|=k+1;\varphi(X_i)=1}F[\varphi]\\
\overset{(*)}{=} & (k+1)\sum_{\psi:\bm X\rightarrow \set{0,1};|\psi|=k+1}F[\psi]\\
= & (k+1)\#_{k+1}F
  \end{align*}
 Equality $(*)$ holds because each valuation $\varphi$, which maps $X_i$ and $k$ other variables to $1$ and the remaining $n-k-1$ variables to $0$, is considered $k+1$ times when iterating over all $i\in[n]$. More precisely, let $T$ be the set of the indices of the $k+1$ variables set to $1$ in $\varphi$. Then, out of the $n$ iterations in the outer sum, the valuation $\varphi$ is only considered for $i\in T$.

  \nop{
  In words, for each term $\#_kF[X_i:=1]$ we need to sum over
  valuations $\theta$ that set $k$ variables other than $X_i$ to $1$.
  If we also include $X_i$, which is set to 1 in $F[X_i:=1]$, then we
  obtain a valuation $\varphi = \set{X_i:=1} \cup \theta$ that sets
  $k+1$ variables to $1$, including $X_i$.  Conversely, any valuation
  $\psi : \bm X \rightarrow \set{0,1}$ that sets $k+1$ variables to $1$
  can be obtained from $k+1$ different $\varphi$'s, since there are
  $k+1$ possible choices for the distinguished variable $X_i$.  
  }

 Equality~\eqref{eq:count2} above follows from a similar argument.
  \begin{align*}
    \sum_{i=1}^{n}\#_kF[X_i:=0] = & \sum_{i=1}^{n}\ \sum_{\theta:\bm X-\set{X_i}\rightarrow \set{0,1};|\theta|=k}F[\set{X_i:=0}\cup\theta]\\
= & \sum_{i=1}^{n}\ \sum_{\varphi:\bm X\rightarrow \set{0,1};|\varphi|=k;\varphi(X_i)=0}F[\varphi]\\
\overset{(**)}{=} & (n-k)\sum_{\psi:\bm X\rightarrow \set{0,1};|\psi|=k}F[\psi]\\
= & (n-k)\#_{k}F
  \end{align*}
 Equality $(**)$ holds because each valuation $\varphi$, which maps $X_i$ to $0$, $k$ other variables to $1$, and the remaining $n-k-1$ variables to $0$, is considered $n-k$ times when iterating over all $i\in[n]$. More precisely, let $T$ be the set of the indices of the $k$ variables set to $1$ in $\varphi$. Then out of the $n$ iterations in the outer sum, the valuation $\varphi$ is only considered for $i\in [n]\setminus T$, as for $i\in T$ the considered valuations have variable $X_i$ set to $0$.
  
\nop{
  The explanation of Equality $(**)$ is analogous to the one for Equality $(*)$. Let 
   $\varphi$ be a model of $F[\varphi]$ that maps exactly 
  the variables ${\bm Y} \subseteq {\bm X}$ with $|\bm Y| = k$
  to $1$.
  Within the scope of the double sum, 
  this valuation appears once for each choice for $X_i$ among the $n - k$ variables in ${\bm X}- {\bm Y}$. This means that the valuation appears exactly $n - k$ times.
}

  This completes the proof of Claim~\ref{claim:2}.  Thus, we
  have computed all $n$ differences $(k+1)\#_{k+1}F - (n-k)\#_kF$.
  The final step is the following.  Start by observing that
  $\#_0F = F[\bm 0]$, where $\bm 0$ is the valuation that sets all
  variables to 0.  Then, proceed inductively, computing $\#_kF$ for
  $k=\{1,\ldots, n\}$, using Claim~\ref{claim:2}, where we
  have already computed the left-hand side.
\end{proof}

\paragraph{AND-substitutions}

Theorem~\ref{th:main} also holds for AND-substitutions:
  \begin{align*}
    F^{(\ell)} \defeq & F[X_1 := \bigwedge_{j=1}^{\ell}Z_{1}^j, \ldots, X_n := \bigwedge_{j=1}^{\ell}Z_{n}^j]
  \end{align*}
  where each $Z_{i}^j$ with $i\in [n]$ and $j \in [\ell]$ is a fresh variable. To accommodate AND-substitutions, Claim~\ref{cl1} changes as follows: 
\begin{claim}
For each $\ell \in \mathbb{N}$, it holds:
  \begin{align*}
\#F^{(\ell)} = & \sum_{k=0}^n (2^\ell-1)^{n-k} \#_kF
  \end{align*}    
\end{claim}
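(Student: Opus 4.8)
The plan is to mirror the proof of Claim~\ref{cl1} for the OR-substitution case almost verbatim, changing only the definition of the induced valuation and the combinatorial preimage count. First I would let $\bm Z = \set{Z_i^j \mid i \in [n], j \in [\ell]}$ be the variables of $F^{(\ell)}$ and start from $\#F^{(\ell)} = \sum_{\varphi : \bm Z \to \set{0,1}} F^{(\ell)}[\varphi]$. For each $\varphi$ I would define the induced valuation $\theta_\varphi : \bm X \to \set{0,1}$ by $\theta_\varphi(X_i) = \varphi(\bigwedge_{j=1}^{\ell} Z_i^j)$, so that $\theta_\varphi(X_i) = 1$ iff $\varphi$ sets \emph{all} of $Z_i^1, \ldots, Z_i^\ell$ to $1$. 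As in the OR case, the substitution respects the logical structure of $F$, so $F^{(\ell)}[\varphi] = F[\theta_\varphi]$ holds for every $\varphi$.

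The one genuinely new ingredient is the preimage count, and this is where the duality between $\wedge$ and $\vee$ flips the exponent. I would establish that
\begin{align*}
  \forall \theta : \bm X \to \set{0,1}, \quad |\setof{\varphi}{\theta_\varphi = \theta}| = (2^\ell - 1)^{n - |\theta|}.
\end{align*}
The reasoning is independent across the $n$ blocks of variables. For each $i$ with $\theta(X_i) = 1$, the block $Z_i^1, \ldots, Z_i^\ell$ is forced to the all-ones assignment, contributing exactly one choice. For each $i$ with $\theta(X_i) = 0$, at least one $Z_i^j$ must be $0$, leaving $2^\ell - 1$ admissible assignments of that block. Multiplying over the $|\theta|$ blocks of the first kind and the $n - |\theta|$ blocks of the second kind gives $1^{|\theta|} \cdot (2^\ell - 1)^{n - |\theta|} = (2^\ell - 1)^{n - |\theta|}$. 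This is precisely the mirror image of the OR case, where it is the $\theta(X_i) = 1$ blocks that admit $2^\ell - 1$ assignments.

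Finally I would group the valuations $\varphi$ by their induced valuation and sum, exactly as in the OR proof:
\begin{align*}
  \#F^{(\ell)} &= \sum_{\theta : \bm X \to \set{0,1}} \ \sum_{\varphi : \theta_\varphi = \theta} F^{(\ell)}[\varphi]
    = \sum_{\theta : \bm X \to \set{0,1}} (2^\ell - 1)^{n - |\theta|} F[\theta]
    = \sum_{k=0}^n (2^\ell - 1)^{n-k} \#_k F,
\end{align*}
where the last step partitions the valuations by $|\theta| = k$. I do not expect any real obstacle here: the argument is structurally identical to Claim~\ref{cl1}, and the only subtlety is getting the exponent right, namely recognizing that under AND-substitution it is the variables mapped to $0$ (rather than to $1$) whose blocks carry the $2^\ell - 1$ factor, which is exactly what turns $|\theta|$ into $n - |\theta|$. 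Once this claim is in hand, the matrix obtained by iterating $\ell \in [n+1]$ is again Vandermonde in the quantities $(2^\ell - 1)$ (up to reindexing the unknowns as $\#_{n-k}F$), so the rest of Lemma~\ref{lemma:1}, and hence the full theorem, carries over unchanged.
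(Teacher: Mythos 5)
Your proof is correct and takes exactly the route the paper intends: the paper states this AND-substitution claim without proof as the direct analogue of Claim~\ref{cl1}, and your argument reproduces that proof with the one necessary change correctly carried out, namely the preimage count $|\setof{\varphi}{\theta_\varphi=\theta}| = (2^\ell-1)^{n-|\theta|}$, since under AND-substitution it is the blocks with $\theta(X_i)=0$ that admit $2^\ell-1$ assignments. Your closing remark that the linear system obtained by iterating $\ell$ remains Vandermonde after reindexing the unknowns is also right, so the rest of the reduction indeed carries over unchanged.
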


\nop{
A XOR-substitution is: 
  \begin{align*}
    F^{(\ell)} \defeq & F[X_1 := \bigoplus_{j=1}^{\ell}Z_{1}^j, \ldots, X_n := \bigoplus_{j=1}^{\ell}Z_{n}^j]
  \end{align*}
  where each $Z_{i}^j$ with $i\in [n]$ and $j \in [\ell]$ is a fresh variable. Then, Claim~\ref{eq:claim:1} becomes: 
\begin{claim}
For each $\ell \in \mathbb{N}$, it holds:
  \begin{align*}
\#F^{(\ell)} = & \sum_{k=0}^n (2^{\ell-1})^k(2^{\ell-1})^{n-k} \#_kF
  \end{align*}    
\end{claim}
This is because there are $2^{\ell-1}$ assignments that make $Z^1_i \oplus \cdots \oplus Z^\ell_i$ true, and also $2^{\ell-1}$ assignments that make it false. 

\dan{I was referring to this dichotomy for model counting
  in~\cite[Theorem 4.1]{DBLP:journals/iandc/CreignouH96}, which proves
  that \#SAT is in PTIME iff ``every relation is affine''.  But I
  don't know how to use it in our setting.  To recap, they consider
  CNF formulas $F$ where each clause is an affine expression, e.g
  $F = (X \oplus Y \oplus Z)\cdot(1 \oplus Y \oplus W)$, where
  $\oplus$ is ``xor''.  Notice that $1 \oplus X \equiv \neg X$.
  However, I don't see immediately whether this class is closed under
  OR substitutions or under AND substitutions.  It is closed under
  XOR-substitutions, of the form
  $X := Z_1 \oplus \cdots \oplus Z_\ell$, but then
  Eq.~\eqref{eq:claim:1} becomes
  $\sum_{k=0}^n (2^{\ell-1})^k(2^{\ell-1})^{n-k} \#_kF$, because there
  are $2^{\ell-1}$ assignments that make
  $Z_1 \oplus \cdots \oplus Z_\ell$ true, and also $2^{\ell-1}$
  assignments that make it false.  This doesn't help us, because $k$
  disappears from Eq.~\eqref{eq:claim:1}.  Could there be a different
  substitution that works here?  Beyond AND/OR/XOR?}
}
\section{From Functions to Circuits}
\label{sec:circuits}

In general, Boolean functions do not admit polynomial-time satisfiability and model counting. Knowledge compilation is an approach that turns Boolean functions into equivalent representations that admit polynomial-time computation for a large number of tasks including model counting~\cite{DarwicheM:KCMap:2002,DarwicheMSS:KnowledgeCompilation:2017}. The price to pay is a possibly exponential time in the number of variables of the function to compute such an equivalent yet tractable representation. The tractability of well-known circuits, such as  OBDDs and d-DNNFs, relies on two key properties: determinism and decomposability.

We next recall the notion of a deterministic and decomposable circuit and then show that such circuits can efficiently accommodate OR-substitutions. This implies that the Shapley value can be computed in time polynomial in the size of such tractable circuits.

\subsection{Deterministic and Decomposable Circuits}

A {\em Boolean circuit} $G$ over a set $\bm X$ of variables is a directed acyclic graph where each node is one of the following {\em gates}:
\begin{itemize}
    \item A constant gate labeled with either 0 or 1;
    \item A variable gate labeled with a variable from $\bm X$; 
    \item A logic gate labeled with a Boolean connector $\wedge$ (and), $\vee$ (or), or $\neg$ (not).
\end{itemize}
The constant and variable gates have no incoming edges. The logic gates $\wedge$ and $\vee$ may have two or more incoming edges, and the logic gate $\neg$ has one incoming edge. There is one gate, called the output gate, that has no outgoing edge. The size of a circuit $G$, denoted by $|G|$, is the number of its gates (or the number of edges minus one). A valuation $\theta$ over $\bm X$ maps the circuit $G$ to $G[\theta]$, which is 0 or 1.

Boolean circuits are representations of Boolean functions. In this paper we are interested in Boolean circuits that satisfy the determinism and decomposability properties. Given a circuit $G$, a gate $g$ in $G$ defines the circuit $G_g$ that is $G$ where all gates that have no directed path to $g$ are removed. An $\vee$-gate $g$ is {\em deterministic} if  for every pair $(g_1,g_2)$ of distinct input gates of $g$, their circuits $G_{g_1}$ and  $G_{g_2}$ are disjoint: There is no valuation $\theta$ such that $G_{g_1}[\theta] = G_{g_2}[\theta] = 1$. An $\wedge$-gate $g$ is {\em decomposable} if for every pair $(g_1,g_2)$ of distinct input gates of $g$, their circuits $G_{g_1}$ and  $G_{g_2}$ have no variable in common. A circuit is deterministic if all its $\vee$-gates are deterministic and is decomposable if all its $\wedge$-gates are decomposable. 

\begin{example}
    Consider the circuit $(\neg X_1 \wedge X_2) \vee (X_1 \wedge X_3)$. It is deterministic as its only $\vee$-gate is deterministic: There is no valuation that maps both $\neg X_1 \wedge X_2$ and $X_1 \wedge X_3$ to 1, since the two functions are mutually exclusive. It is also decomposable since for both $\wedge$-gates have input gates whose circuits do not share variables. 
\end{example}

\subsection{Circuits under OR-substitutions}

Our main insight in this section is that the deterministic and decomposable circuits can efficiently accommodate OR-substitutions. Let $\calG$ be the class of deterministic and decomposable circuits and $\widetilde \calG$ be the class of circuits in $\calG$ where some variables are OR-substituted.

\begin{lemma}\label{lemma:dD-closure}
    $\widetilde \calG~\lesssim^P~\calG$.
\end{lemma}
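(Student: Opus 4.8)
The plan is to show that the naive OR-substitution can be massaged into a genuine member of $\calG$ by re-encoding each introduced disjunction as a \emph{deterministic} and decomposable gadget. Start with a circuit $G\in\calG$ over $\bm X=\set{X_1,\ldots,X_n}$ and the circuit $\widetilde G\in\widetilde\calG$ obtained from it by replacing every variable gate labelled $X_i$ with a subcircuit computing $Z_i^1\vee\cdots\vee Z_i^{m_i}$ (with $X_i$ mapped to a constant $0$-gate when $m_i=0$). The circuit $\widetilde G$ is already decomposable and computes the intended function, but it is in general \emph{not} deterministic: each freshly introduced $\vee$-gate $Z_i^1\vee\cdots\vee Z_i^{m_i}$ is violated by the valuation that sets two of its inputs to $1$. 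These gadget $\vee$-gates are the one and only obstruction, and removing them is the crux of the proof.

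To repair determinism I would replace each gadget $Z_i^1\vee\cdots\vee Z_i^{m_i}$ by an equivalent subcircuit $D_i$ whose disjuncts are pairwise mutually exclusive, namely
\[
  D_i \;\equiv\; \bigvee_{j=1}^{m_i}\Bigl(\neg Z_i^1\wedge\cdots\wedge\neg Z_i^{j-1}\wedge Z_i^j\Bigr),
\]
with $D_i\equiv 0$ when $m_i=0$. Introducing the partial products $P_j\defeq\neg Z_i^1\wedge\cdots\wedge\neg Z_i^j$ and reusing them ($P_j=P_{j-1}\wedge\neg Z_i^j$, and the $j$-th disjunct is $P_{j-1}\wedge Z_i^j$) builds $D_i$ with $O(m_i)$ gates; each $\wedge$-gate is a conjunction of literals over pairwise distinct variables (hence decomposable), and the top $\vee$-gate is deterministic because disjunct $j$ forces $Z_i^j=1$ while every earlier disjunct forces $Z_i^j=0$. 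Let $H$ be the circuit obtained from $\widetilde G$ by substituting each $D_i$ for the corresponding naive gadget. Since the $D_i$ are equivalent to the gadgets they replace, $H\equiv\widetilde G$, and $H$ is produced in polynomial time (the gadgets are the only non-deterministic gates and are identifiable as the $\vee$-gates over the fresh variable blocks).

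It remains to certify $H\in\calG$, which I would split along gate provenance. Gates internal to some $D_i$ are deterministic and decomposable by the construction above. For a gate $g$ inherited from $G$, I first record the clean transformation of variable sets: the variables reachable from $g$ in $H$ are exactly $\bigcup_{X_i\in\text{vars}(G_g)}\set{Z_i^1,\ldots,Z_i^{m_i}}$, because each $X_i$-gate was replaced by a gadget over the fresh block $\set{Z_i^1,\ldots,Z_i^{m_i}}$ and distinct variables receive disjoint blocks. Hence if $g=g_1\wedge g_2$ is decomposable in $G$, the index sets of $G_{g_1}$ and $G_{g_2}$ are disjoint, so their fresh blocks are disjoint and $g$ stays decomposable in $H$. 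For determinism of an inherited $\vee$-gate $g=g_1\vee g_2$, I would invoke the induced-valuation identity already used in the proof of Lemma~\ref{lemma:1} (Eq.~\eqref{eq:fresh_valuation1}): every valuation $\varphi$ over the fresh variables induces $\theta_\varphi$ over $\bm X$, and since each $D_i$ computes exactly $\theta_\varphi(X_i)$ we get $H_{g_t}[\varphi]=G_{g_t}[\theta_\varphi]$ for $t\in\set{1,2}$. Thus $H_{g_1}[\varphi]=H_{g_2}[\varphi]=1$ would force $G_{g_1}[\theta_\varphi]=G_{g_2}[\theta_\varphi]=1$, contradicting determinism of $g$ in $G$; so $g$ remains deterministic.

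The main obstacle is conceptual rather than computational. One has to notice that OR-substitution, while trivially preserving decomposability and equivalence, destroys determinism through the newly created disjunctions, and then verify that replacing those disjunctions by a mutually-exclusive decomposition neither reintroduces nondeterminism (handled by the construction of $D_i$) nor breaks the determinism of the \emph{original} $\vee$-gates. The latter is exactly the point where the induced-valuation identity from Lemma~\ref{lemma:1} does the work, since it lets us transfer the mutual exclusivity of $G_{g_1}$ and $G_{g_2}$ over $\bm X$ to their substituted counterparts over the fresh variables.
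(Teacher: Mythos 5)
Your proposal is correct and takes essentially the same approach as the paper: the paper's recursive gadget $G_\vee(Z_i,\ldots,Z_\ell) = Z_i \vee (\neg Z_i \wedge G_\vee(Z_{i+1},\ldots,Z_\ell))$ is exactly your mutually-exclusive ``first true variable'' decomposition with shared partial products, of the same $O(\ell)$ size. The only difference is one of exposition: you spell out (via disjoint fresh blocks and the induced-valuation identity) why the inherited $\wedge$- and $\vee$-gates keep decomposability and determinism, a step the paper's proof asserts without detail, and you handle negated occurrences implicitly through the gate-wise identity where the paper substitutes $\neg X$ by $\neg Z_1\wedge\cdots\wedge\neg Z_\ell$ explicitly.
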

More precisely, we can show the following for any deterministic and decomposable circuit $G$, a variable $X$ that occurs $k$ times in $G$, and distinct variables $Z_1,\ldots, Z_n$ that do not occur in $G$: A deterministic and decomposable circuit that represents $G$ under the OR-substitution $X\overset{\text{OR}}{\rightarrow} \bigvee_{i=1}^\ell Z_i$ can be computed in $O(|G|+k\ell)$
time.
This proves that the assumption made at the beginning of Section~\ref{sec:main-results-functions} holds for such circuits.

\begin{proof}
While the circuit $G_\vee(Z_1,\ldots,Z_\ell) = Z_1\vee \cdots \vee Z_\ell$ that replaces $X$ is not deterministic, it can be turned into an equivalent deterministic and decomposable circuit of size $O(\ell)$: 
    \begin{align*}
        G_\vee(Z_i,\ldots,Z_\ell) &= Z_i \vee (\neg Z_i \wedge (G(Z_{i+1},\ldots,Z_\ell))), \text{ for } i\in[\ell-1]\\
        G_\vee(Z_\ell) &= Z_\ell
    \end{align*}
Its negation $\neg G_\vee(Z_1,\ldots,Z_\ell)$ can be equivalently expressed as $\neg Z_1\wedge\cdots\wedge\neg Z_\ell$, which is both deterministic and decomposable, since $Z_1$ to $Z_\ell$ are distinct variables. Furthermore, substituting $X$ by $G_\vee$ and $\neg X$ by $\neg G_\vee$ does not violate the decomposability and determinism of the gates that are reached from $X$ and $\neg X$.
\end{proof}

The next theorem states that the Shapley value can be computed in polynomial time on deterministic and decomposable circuits. It is an immediate corollary of three results: (1) the well-known result on tractability of model counting for $\calG$~\cite{DarwicheM:KCMap:2002}; (2) Lemma~\ref{lemma:dD-closure} stating that OR-substitutions can be assimilated by any circuit in $\calG$ in FP; and (3) Theorem~\ref{th:main} conditioning the tractability of $\shap$ on the tractability of model counting for functions under OR-substitutions.

\begin{thm}\label{th:dD-circuits}
    $\shap(\calG)$ is in FP. \nop{That is, for any deterministic and decomposable circuit $G$, $\shap(G)$ can be computed in polynomial time in the size of $G$.}
\end{thm}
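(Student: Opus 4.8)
The plan is to obtain Theorem~\ref{th:dD-circuits} as a direct composition of three ingredients that the excerpt has already established, so no new combinatorial work is required. First I would invoke Theorem~\ref{th:main}, specifically the reduction $\shap(\calC) \leq^P \#_*\widetilde{\calC}$ combined with $\#_*\calC \leq^P \#\widetilde{\calC}$, instantiated with $\calC = \calG$, the class of deterministic and decomposable circuits. This tells us that computing $\shap(G)$ for $G \in \calG$ reduces in polynomial time to a fixed-size model counting oracle over $\widetilde{\calG}$, which in turn reduces to a plain model counting oracle over (the OR-substituted extension of) $\widetilde{\calG}$.

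The second ingredient is Lemma~\ref{lemma:dD-closure}, which gives $\widetilde{\calG} \lesssim^P \calG$: every circuit in $\widetilde{\calG}$ can be rewritten in polynomial time into an equivalent circuit that is again deterministic and decomposable. This is exactly the ``assumption made at the beginning of Section~\ref{sec:main-results-functions}'' that OR-substitutions are computable in polynomial time, and it means the oracle calls arising from Theorem~\ref{th:main} can all be answered on genuine members of $\calG$ after a polynomial-time preprocessing step. The third ingredient is the classical result~\cite{DarwicheM:KCMap:2002} that model counting $\#\calG$ is in FP for deterministic and decomposable circuits: determinism lets model counts be summed additively at $\vee$-gates and decomposability lets them be multiplied at $\wedge$-gates, so a single bottom-up pass computes $\#G$.

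Chaining these, I would argue as follows. Given $G \in \calG$, I run the reduction from $\shap$ to $\#_*$ to $\#$ (Theorem~\ref{th:main}); each oracle query is a circuit in $\widetilde{\calG}$; by Lemma~\ref{lemma:dD-closure} I transform each such query in polynomial time into an equivalent circuit in $\calG$; and by the tractability of $\#\calG$ I answer each transformed query in polynomial time. Since equivalent functions have identical model counts, the transformation preserves the values returned by the oracle, so the overall procedure computes $\shap(G)$ correctly and in time polynomial in $|G|$. Hence $\shap(\calG)$ is in FP.

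The only point that requires care, rather than any real difficulty, is bookkeeping the size blowup through the composition: the fixed-size counting reduction (Lemma~\ref{lemma:1}) builds circuits $F^{(\ell)}$ with substitution parameter $\ell$ ranging up to $n+1$, so each OR-substituted variable is replaced by a disjunction of up to $n+1$ fresh variables, and I must check that Lemma~\ref{lemma:dD-closure}'s $O(|G| + k\ell)$ bound keeps the transformed circuit polynomially sized across all $O(n)$ queries. This is routine since $\ell = O(n)$ and each variable occurs at most $|G|$ times, so every transformed circuit has size $O(|G| \cdot n)$, and there are only polynomially many queries; I expect no genuine obstacle here, as all three cited results are black boxes and the composition is the entire content of the theorem.
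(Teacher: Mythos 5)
Your proposal is correct and takes essentially the same route as the paper, which presents Theorem~\ref{th:dD-circuits} as an immediate corollary of exactly your three ingredients: Theorem~\ref{th:main}, Lemma~\ref{lemma:dD-closure} (so every oracle query in $\widetilde{\calG}$ can be rewritten in polynomial time into an equivalent circuit in $\calG$), and the tractability of $\#\calG$~\cite{DarwicheM:KCMap:2002}. Your final size-bookkeeping paragraph is a harmless elaboration of what the paper leaves implicit.
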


\section{From Functions to Queries}
\label{sec:queries}

We now lift our investigation of the Shapley value computation problem from (propositional) Boolean functions to (first-order) conjunctive queries. This is an application of our main result in Theorem~\ref{th:main}, enabled by the observation that the {\em lineage or provenance polynomial}~\cite{DBLP:conf/pods/GreenKT07} of a query is in fact a Boolean function.

One challenge in  our pursuit is to understand what is the counterpart of OR-substitutions at the query level. For this purpose, we introduce the  notion of {\em stretching} of a query and show that the lineage of the stretching of a CQ $Q$ is equivalent to the lineage of $Q$ under OR-substitutions. Furthermore, the two lineages can be transformed into one another in polynomial time. One caveat specific to this section is that the problems and the reductions used in the results below use data complexity\footnote{Under data complexity, the query is fixed and has constant size. The  complexity $O(|D|^{|Q|})$ is thus polynomial time, since the exponent $|Q|$ is the constant query size.}.

The main result of this section is the recovery of the dichotomy for Shapley value computation~\cite{LivshitsBKS21} using immediate derivations based on our main theorem and classical results for  model counting.

\subsection{Conjunctive Queries and Lineage}

We consider databases where some relations are {\em endogenous} while all others are {\em exogenous}. While we are interested in the contribution of the tuples from  endogenous relations to the answer of a query, we disregard the contribution of the tuples from exogenous relations.
Whenever we need to distinguish between the two kinds of relations, we annotate an endogenous relation $R$ as $R^n$ and an exogenous relation $R$ as $R^x$.

A Boolean Conjunctive Query (CQ) is:
\begin{align}
  Q = & \exists \bm x \bigwedge_{j\in[m]} R_j(\bm y_j) \label{eq:cq}
\end{align}
where $\bm x$ is the tuple of all variables in $Q$, $R_j(\bm y_j)$ are the atoms of $Q$ where $R_j$ is either an endogenous or an exogenous relation, and $\bm y_j \subseteq \bm x$ for $j\in[m]$. The size of $Q$, denoted by $|Q|$, is the number $m$ of its atoms. We denote by $at(x)$ the atoms with variable $x$, i.e., $at(x) =\{R_j(\bm y_j) | j\in[m], x\in \bm y_j\}$. To distinguish between variables in queries from those in Boolean functions, we write the former in lowercase and the latter in uppercase.

A CQ $Q$ is {\em hierarchical} if for any two query variables $x$ and $y$, one of the the following conditions hold: $at(x)\cap at(y)=\emptyset$, $at(x)\subseteq at(y)$, or $at(y)\subseteq at(x)$. A CQ $Q$ is {\em self-join-free} if there are no two atoms for the same relation.

For each database instance $\bm D$, the {\em lineage} 
$F_{Q, \bm D}$ of a CQ $Q$ over $\bm D$
 is a positive Boolean function in disjunctive normal form (DNF) over the variables $v(t)$ associated to the tuples $t$ in $\bm D$. Each clause in the lineage is a conjunction of $m$ variables, where $m$ is the number of relation atoms in $Q$. We define lineage recursively on the structure of a CQ ($\bm D$ is implicit and dropped from the subscript):
\begin{align*}
    F_{Q_1\vee Q_2} &= F_{Q_1}\vee F_{Q_2} &\hspace*{2em} F_{Q_1\wedge Q_2} &= F_{Q_1}\wedge F_{Q_2} \\
    F_{\exists x Q} &= \bigvee_{a\in\text{adom}(D)} F_{Q[a/x]}\\
    F_{R^n(t)} &=
    \begin{cases}
        v(t) & \text{if } t\in R\\
        0    & \text{otherwise}
    \end{cases} &\hspace*{2em}
    F_{R^x(t)} &=
    \begin{cases}
        1 & \text{if } t\in R\\
        0    & \text{otherwise}
    \end{cases}
\end{align*}
The lineage of a conjunction (disjunction) of two subqueries is the conjunction (disjunction) of their lineages. In case of an existential quantifier $\exists x$, we construct the disjunction of the lineages of all residual queries obtained by replacing the query variable $x$ by each value in the active domain (adom) of the database $\bm D$. Once all variables in an atom $R(t)$ are replaced by constants, we check whether the tuple $t$ of these constants is in the relation $R$. If it is not, then it does not contribute to the lineage (it is 0, or false). If it is, then 
we distinguish two cases. If $R$ is endogenous, then the Boolean variable $v(t)$ associated with the tuple $t$ is added to the lineage. If $R$ is exogenous, then we add instead 1 (or true) to signal that the variable $v(t)$ is not relevant for Shapley value computation.

The query $Q$ defines a class of Boolean functions consisting of the lineages of $Q$ over all databases $\bm D$:
\begin{align*}
  \calC_Q \defeq & \{ F_{Q,\bm D} \mid \bm D \mbox{ is a database instance} \}
\end{align*}

\subsection{Stretching Databases and Queries}

The following transformation is central to this section:

\begin{definition}\label{def:stretching}
Given an endogenous relation $R^n(y_1, \ldots, y_k)$ with attributes $y_1, \ldots, y_k$, its {\em stretching} is the relation $\widetilde R^n(y_0, y_1, \ldots, y_k)$. That is, we add one new attribute on the first position.

Given a CQ, where $\forall j\in[m]: \bm a_j\subseteq\bm a$ and $\forall j\in[p]: \bm b_j\subseteq\bm b$: 
\begin{align*}
    Q = \exists \bm a\ \exists \bm b \bigwedge_{j\in[m]} R^n_j(\bm a_j) \wedge \bigwedge_{j\in[p]} S^x_j(\bm b_j)
\end{align*}
its {\em stretching} is the CQ
\begin{align*}
  \widetilde Q = &  \exists \bm a\ \exists z_1\ldots\exists z_m \ \exists \bm b \bigwedge_{j\in[m]} R^n_j(z_j, \bm a_j) \wedge \bigwedge_{j\in[p]} S^x_j(\bm b_j)
\end{align*}
where $z_1, \ldots, z_m$ are fresh existential variables, one for every atom of an endogenous relation. 

%
\end{definition}

\begin{example}
The stretching of the non-hierarchical query
\begin{align}
    Q = \exists x \exists y \ R^n(x) \wedge S^x(x,y) \wedge T^n(y) \label{eq:nonh-query}
\end{align}
is
\begin{align}
  \widetilde Q = \exists x \exists y \exists z_1 \exists z_2\ R^n(z_1,x) \wedge S^x(x,y) \wedge T^n(z_2,y) \label{eq:nonh-stretching}
\end{align}
\end{example}

The stretching at the query level captures the OR-substitutions at the lineage level. That is, the lineage of $Q$ under OR-substitutions can be recovered via a polynomial-time transformation from the lineage of the stretching of $Q$ and vice versa. This shows that the assumption made at the beginning of Section~\ref{sec:main-results-functions} holds for lineage: We can construct in polynomial time\footnote{This is in polynomial time data complexity, so possibly exponential in the query size or equivalently in the arity of the clauses in the lineage.} a lineage for the stretched query from the lineage of the query under OR-substitutions.
The relationship between a CQ $Q$, its stretching $\widetilde Q$, their lineages $F_{Q,D}$ and $F_{\widetilde Q,\widetilde D}$ over databases 
$D$ and $\widetilde D$, and the function 
$\widetilde F_{Q,D}$ obtained from $F_{Q,D}$ by 
OR-substitution, is depicted below:

\begin{center}
\begin{tikzpicture}

\node (A) at (0, 0) {$Q$};
\node (B) at (6, 0) {$\widetilde Q$};
\node (C) at (0, -2) {$F_{Q,D}$};
\node (D) at (6, -2) {$\widetilde {F_{Q,D}} \equiv F_{\widetilde Q,\widetilde D}$};

\draw [->]  (A) edge (B) node [right = 2.7 of A, above] {stretching};
\draw [->]  (A) edge (C) node [below = 0.7 of A, left] {$D$};
\draw [->]  (B) edge (D) node [below = 0.7 of B, right] {$\widetilde D$};
\draw [->]  (C) edge (D) node [right = 2.7 of C, above] {OR-substitution};       
\end{tikzpicture}
\end{center}
In the bottom right node, the functions $\widetilde {F_{Q,D}}$ and $F_{\widetilde Q,\widetilde D}$ are equivalent and transformable into each other in polynomial time.
The above relationship implies a bidirectional polynomial-time transformation
between  $\widetilde{\calC_Q}$ and $\calC_{\widetilde Q}$:
\begin{lemma}\label{lm:lineage-stretching}
    $\widetilde{\calC_Q} \approx^P \calC_{\widetilde Q}$ holds for any CQ $Q$ and its stretching $\widetilde Q$.
\end{lemma}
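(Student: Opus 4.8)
The plan is to reduce both directions of the bidirectional transformation to a single \emph{factoring identity} relating the lineage of the stretched query to the OR-substituted lineage of the original query. I fix the correspondence $Z_t^c \leftrightarrow v(c,t)$ between the fresh variable $Z_t^c$ introduced when OR-substituting $v(t)$ and the Boolean variable of the stretched tuple $(c,t)$. I will then prove that if, for each endogenous relation, $R_j=\pi_{y_1,\ldots,y_k}(\widetilde R^n_j)$ and $\theta$ is the OR-substitution with $v(t):=\bigvee_{c:\,(c,t)\in\widetilde R^n_j} v(c,t)$ (an empty disjunction, hence $0$, when $t$ has no preimage), then $F_{\widetilde Q,\widetilde D}\equiv F_{Q,D}[\theta]$. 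Both transformations follow by instantiating this identity in the two directions.

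To prove the identity I would write both sides in explicit DNF using the recursive definition of lineage. On the left, $F_{Q,D}$ is the disjunction, over all valuations $\nu$ of $\bm a,\bm b$ satisfying every exogenous and every endogenous atom, of the clause $\bigwedge_{j\in[m]} v(\nu(\bm a_j))$; applying $\theta$ turns each literal $v(\nu(\bm a_j))$ into $\bigvee_{c:(c,\nu(\bm a_j))\in\widetilde R^n_j} v(c,\nu(\bm a_j))$. On the right, because each stretching variable $z_j$ is fresh and occurs only in atom $j$, the lineage $F_{\widetilde Q,\widetilde D}$ is the disjunction over $\nu$ and over tuples $(c_1,\ldots,c_m)$ of the clauses $\bigwedge_j v(c_j,\nu(\bm a_j))$; grouping by $\nu$ and applying the distributive law $\bigwedge_j \bigvee_{c\in S_j} a^{(j)}_c \equiv \bigvee_{(c_1,\ldots,c_m)} \bigwedge_j a^{(j)}_{c_j}$ collapses the inner disjunction over the $c_j$ into exactly the OR-substituted clause. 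When $\nu(\bm a_j)$ is absent from $\widetilde R^n_j$ the atom contributes the empty disjunction $0$ in $F_{\widetilde Q,\widetilde D}$, matching the fact that such a $\nu$ is not a satisfying valuation of $Q$ and so produces no clause in $F_{Q,D}[\theta]$; exogenous atoms contribute no variables and restrict $\nu$ identically on both sides.

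With the identity in hand, the two transformations are routine database constructions carried out in data complexity. For $\widetilde{\calC_Q}\lesssim^P\calC_{\widetilde Q}$, given $F_{Q,D}[\theta]$ I build $\widetilde D$ by replacing each endogenous tuple $t$ with the $m_t$ tuples $(1,t),\ldots,(m_t,t)$ using distinct first-column values (and by nothing when $m_t=0$), leaving each exogenous relation unchanged; this takes time $O(|D|+\sum_t m_t)$, linear in the input, and the identity gives $F_{\widetilde Q,\widetilde D}\equiv F_{Q,D}[\theta]$. For the converse $\calC_{\widetilde Q}\lesssim^P\widetilde{\calC_Q}$, given an arbitrary database $\widetilde D'$ over the stretched schema I set $R_j=\pi_{y_1,\ldots,y_k}(\widetilde R^n_j)$ and define $\theta$ by letting $m_t$ be the number of first-column values paired with $t$; this is again linear-time, and the identity gives $F_{Q,D}[\theta]\equiv F_{\widetilde Q,\widetilde D'}$.

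The step I expect to require the most care is the distributive collapse in the presence of self-joins. When the same relation occurs in two atoms and $\nu$ maps their variables to the same tuple $t$, the expansion of $F_{\widetilde Q,\widetilde D}$ produces cross terms $v(c,t)\wedge v(c',t)$ with $c\neq c'$ that are absent from the idempotent clause of $F_{Q,D}[\theta]$. These do not break the argument, since $\big(\bigvee_c v(c,t)\big)\wedge\big(\bigvee_{c'} v(c',t)\big)$ and $\bigvee_c v(c,t)$ are equal as Boolean functions; hence I must state the conclusion as Boolean \emph{equivalence} ($\equiv$) rather than syntactic identity, which is precisely what the definition of a polynomial-time transformation requires. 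I would also verify that the first-column values create no spurious joins, which holds exactly because each $z_j$ is existentially quantified and appears in a single atom.
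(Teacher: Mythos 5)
Your proposal is correct and matches the paper's own proof in all essentials: the same tuple-to-disjunct correspondence (fresh stretching-attribute values $\leftrightarrow$ fresh OR-variables), projection plus grouping of duplicate tuples in one direction, tuple multiplication with fresh first-column values in the other, and Boolean equivalence via the distributive law computed in polynomial time under data complexity. Your packaging of both directions as one factoring identity $F_{\widetilde Q,\widetilde D}\equiv F_{Q,D}[\theta]$ is only a cosmetic unification of what the paper does separately (it is the equivalence displayed in the paper's diagram), and your explicit treatment of self-join cross terms via idempotence is a welcome extra bit of care that the paper leaves implicit.
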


\begin{example}
Consider the query 
$Q = \exists x R_1^n(x), R_2^n(x)$
and its stretching  $\widetilde{Q} = 
\exists x\exists z_1\exists z_2 R_1^n(z_1,x), R_2^n(z_2, x)$.
We depict below 
a database $\bm D$ consisting of 
the relations $R_1$ and $R_2$ and a database $\widetilde{\bm D}$
consisting of the stretched relations.
The variables $Y_i$ and $Z_i^j$ are associated to the database tuples. 
\begin{center}
\begin{tikzpicture}

\node (D) at (-0.8,0.7) {$\bm D:$};

\node (R) at (0,0.7) {$R_1^n(x)$};

\node (tableR) at (0,0) {
\begin{tabular}{c}
$x$ \\\hline
 $a_1$ \\
 $a_2$
\end{tabular}
};

\node (lineageR) at (-0.5,0) {
\begin{tabular}{c}
\phantom{a}\\
$Y_1:$  \\
$Y_2:$
\end{tabular}
};

\node (S) at (1.5,0.7) {$R_2^n(x)$};
\node (tableS) at (1.5,0) {
\begin{tabular}{c}
$x$ \\\hline
 $a_1$ \\
 $a_2$
\end{tabular}
};

\node (lineageS) at (1,0) {
\begin{tabular}{c}
\phantom{a}\\
$Y_3:$  \\
$Y_4:$
\end{tabular}
};


\begin{scope}[xshift= 13em, yshift= 5.5em]
\node (D') at (-1,-1) {$\widetilde{\bm D}:$};

\node (R) at (0.1,-1.05) {$R_1^n(z_1, x)$};

\node (S) at (2.2,-1.05) {$R_2^n(z_2,x)$};

\node (stretchedR) at (0.2,-2.7) {
\begin{tabular}{c@{\hskip 0.05in}c}
$z_1$ & $x$ \\\hline
$b_1^1$ & $a_1$ \\
 $\cdots$ & $\cdots$\\
 $b_1^{m}$& $a_{1}$ \\[5pt]
 $b_2^1$ & $a_2$ \\
 $\cdots$ & $\cdots$\\
$b_2^{n}$ & $a_{2}$ \\
\end{tabular}
};

\node (lineage_stretched_R) at (-0.8,-2.7) {
\begin{tabular}{c}
\phantom{a}\\
$Z_1^1:$  \\
$\cdots$ \\
$Z_1^{m}:$ \\[5pt]
$Z_2^1:$ \\
$\cdots$ \\
$Z_2^{n}:$
\end{tabular}
};

\begin{scope}[xshift=30pt]
\node (stretchedS) at (1.5,-2.7) {
\begin{tabular}{c@{\hskip 0.05in}c}
$z_2$ & $x$ \\\hline
$c_1^1$ & $a_1$ \\
 $\cdots$ & $\cdots$\\
 $c_1^{p}$& $a_1$ \\[5pt]
 $c_2^1$ & $a_2$ \\
 $\cdots$ & $\cdots$\\
$c_2^{q}$ & $a_2$ \\
\end{tabular}
};

\node (lineage_stretched_S) at (0.65,-2.7) {
\begin{tabular}{c}
\phantom{a}\\
$Z_3^1:$  \\
$\cdots$ \\
$Z_3^{p}:$ \\[5pt]
$Z_4^1:$ \\
$\cdots$ \\
$Z_4^{q}:$
\end{tabular}
};
\end{scope}
\end{scope}
\end{tikzpicture}
\end{center}

The lineage of
$Q$ over $\bm D$
is $F_{Q,\bm D} = 
(Y_1 \wedge Y_3) 
\vee 
(Y_2 \wedge Y_4)$, hence 
$F_{Q,\bm D} \in \calC_{Q}$.
The lineage of
$\widetilde{Q}$ over $\widetilde{\bm D}$
is 
$F_{\widetilde{Q},\widetilde{\bm D}}
=\bigvee_{i\in [m], j\in [p]}
(Z_1^i \wedge Z_3^j)
\vee
\bigvee_{i\in [n], j\in [q]}
(Z_2^i \wedge Z_4^j)$, 
hence $F_{\widetilde{Q},\widetilde{\bm D}} \in \calC_{\widetilde{Q}}$. 
Under the OR-substitution $\theta = 
\{
Y_1:= \bigvee_{i = 1}^m Z_1^i,
Y_2:= \bigvee_{i = 1}^n Z_2^i,$
$Y_3:= \bigvee_{i = 1}^p Z_3^i,
Y_4:= \bigvee_{i = 1}^q Z_4^i
\}$, we get 
$F_{Q,\bm D}[\theta] = 
((\bigvee_{i = 1}^m Z_1^i) \wedge 
(\bigvee_{i = 1}^p Z_3^i)) 
\vee 
((\bigvee_{i = 1}^n Z_2^i) \wedge 
(\bigvee_{i = 1}^q Z_4^i))$. 
It holds $F_{Q,\bm D}[\theta]\in 
\widetilde{\calC_Q}$.
Observe that  
$F_{Q,\bm D}[\theta] \equiv F_{\widetilde{Q},\widetilde{\bm D}}$ and 
can be transformed into one another in quadratic time 
using the distributivity law for $\wedge$ over $\vee$ (the time is exponential in the number of endogenous relations).
\end{example}

Lemma~\ref{lm:lineage-stretching} immediately implies the following polynomial-time equivalences between the three problems introduced in Section~\ref{sec:main-results-functions}, now over classes of query lineage:

\begin{corollary}[of Lemma~\ref{lm:lineage-stretching}]
\label{cor:lineage-stretching_corollary}
For any CQ $Q$ and its stretching $\widetilde{Q}$, the following polynomial-time equivalences hold:
\begin{itemize}
\item $\shap(\widetilde{\calC_Q}) \equiv^P \shap(\calC_{\widetilde{Q}})$ 
\item $\#\widetilde{\calC_Q}  \equiv^P \#\calC_{\widetilde{Q}}$
\item $\#_*\widetilde{\calC_Q}  \equiv^P \#_*\calC_{\widetilde{Q}}$
\end{itemize}
\end{corollary}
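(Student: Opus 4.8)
The plan is to derive the corollary directly from Lemma~\ref{lm:lineage-stretching} together with the observation that all three problems $\#$, $\#_*$, and $\shap$ are \emph{semantic}: their output depends only on the equivalence class of the input function, not on its syntactic form. Concretely, if $F_1 \equiv F_2$ then the two functions range over the same set of variables, so by the definitions of model count, fixed-size model count, and of the Shapley value in Eq.~\eqref{eq:shap}, we have $\#F_1 = \#F_2$, $\#_{0,\ldots,n}F_1 = \#_{0,\ldots,n}F_2$, and $\shap(F_1) = \shap(F_2)$, the last equality holding variable-by-variable since the two functions share the same variables.

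First I would unfold the definition of $\approx^P$ supplied by Lemma~\ref{lm:lineage-stretching}: there are two algorithms, $T : \widetilde{\calC_Q} \to \calC_{\widetilde Q}$ and $T' : \calC_{\widetilde Q} \to \widetilde{\calC_Q}$, each running in time polynomial in the representation size and each preserving equivalence, i.e.\ $F \equiv T(F)$ for all $F\in\widetilde{\calC_Q}$ and $G \equiv T'(G)$ for all $G\in\calC_{\widetilde Q}$.

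Next, to establish $\#\widetilde{\calC_Q} \equiv^P \#\calC_{\widetilde Q}$ I would exhibit the two Cook reductions. For $\#\widetilde{\calC_Q} \leq^P \#\calC_{\widetilde Q}$: given $F \in \widetilde{\calC_Q}$, compute $T(F) \in \calC_{\widetilde Q}$ in polynomial time, query the $\#\calC_{\widetilde Q}$ oracle on $T(F)$, and return its answer; correctness is immediate from $F \equiv T(F)$ and the semantic invariance of $\#$. The reverse direction uses $T'$ symmetrically. The same two-line argument, verbatim with $\#_*$ in place of $\#$ and then with $\shap$ in place of $\#$, yields the remaining two equivalences. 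The only point needing a word of care is the Shapley case, where the oracle returns a vector indexed by variables: here one observes that $T$ (resp.\ $T'$) keeps the variable set fixed up to the identification of isomorphic functions, so the vector returned for $T(F)$ is already the answer $\shap(F)$ for the original function.

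I expect the argument to be essentially bookkeeping once Lemma~\ref{lm:lineage-stretching} is in hand; the one place to be careful is confirming that equivalence-preservation is exactly the property needed for all three problems at once, and in particular that the per-variable Shapley output transfers correctly under the transformations. All the genuine work---constructing the stretched database and the OR-substituted lineage and proving them interconvertible in polynomial time---resides in Lemma~\ref{lm:lineage-stretching}, which I am assuming.
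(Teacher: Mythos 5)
Your proposal is correct and matches the paper's own argument: the paper likewise proves the corollary by taking the bidirectional polynomial-time transformations from Lemma~\ref{lm:lineage-stretching}, converting the input function into an equivalent one in the other class, querying the oracle, and invoking the fact that equivalent functions (over the same variables) have identical model counts, fixed-size model counts, and Shapley values, e.g.\ $\widetilde F \equiv F$ implies $\shap(\widetilde F) = \shap(F)$. Your extra care about the Shapley output being a per-variable vector is exactly the right point to check and is consistent with the paper's treatment.
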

For instance, if we want to compute $\shap(\widetilde F)$ for $\widetilde F \in \widetilde{\calC_Q}$, i.e., for $Q$'s lineage under OR-substitutions, and have an oracle for $\shap(\calC_{\widetilde Q})$, i.e., for computing the Shapley values for the lineage of $Q$'s stretching ${\widetilde Q}$, we can first transform  $\widetilde F$ in polynomial time into an equivalent function $F \in \calC_{\widetilde Q}$ and then compute $\shap(F)$ using the oracle. Since $\widetilde F \equiv F$, we have $\shap(\widetilde F) = \shap(F)$.

Query stretching preserves the hierarchical property:
\begin{lemma}\label{lm:hierarchical-stretching}
    A CQ $Q$ is hierarchical iff its stretching $\widetilde Q$ is hierarchical.
\end{lemma}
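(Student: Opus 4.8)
The plan is to prove both directions of the biconditional by directly analyzing how stretching transforms the atom sets $at(x)$ for each variable. The key observation is that stretching leaves the original variables $\bm a$ and $\bm b$ untouched in terms of which atoms contain them: adding a fresh first attribute $z_j$ to each endogenous atom $R_j^n(\bm a_j)$ does not change which original variables occur in that atom. Formally, for any original variable $x$ of $Q$, its atom set is preserved in the sense that $x$ occurs in $R_j^n(z_j,\bm a_j)$ in $\widetilde Q$ precisely when it occurred in $R_j^n(\bm a_j)$ in $Q$, and exogenous atoms are unchanged entirely. So the atom sets of original variables are in natural bijection across $Q$ and $\widetilde Q$.

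First I would handle the original variables. For any two original variables $x,y$, the hierarchy condition compares $at_Q(x)$ and $at_Q(y)$, and by the preservation remark above the corresponding sets $at_{\widetilde Q}(x)$ and $at_{\widetilde Q}(y)$ satisfy exactly the same containment/disjointness relation. Hence the hierarchical condition holds for the pair $(x,y)$ in $Q$ iff it holds in $\widetilde Q$. This already gives the forward direction modulo the fresh variables, and gives one half of the pairwise checks for the backward direction.

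Second I would account for the fresh variables $z_1,\ldots,z_m$, which exist only in $\widetilde Q$. Each $z_j$ occurs in exactly one atom, namely $R_j^n(z_j,\bm a_j)$, so $at_{\widetilde Q}(z_j) = \{R_j^n(z_j,\bm a_j)\}$ is a singleton. The crucial point is that a singleton atom set is automatically in a hierarchical relationship with every other atom set: for any variable $w$, either that single atom lies in $at_{\widetilde Q}(w)$ (giving $at_{\widetilde Q}(z_j)\subseteq at_{\widetilde Q}(w)$) or it does not (giving disjointness, since $at_{\widetilde Q}(z_j)$ has only that one element). Thus every pair involving a fresh variable trivially satisfies the hierarchical condition and imposes no new constraint. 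This shows that $\widetilde Q$ being hierarchical is equivalent to the original-variable pairs being hierarchical, which by the first step is equivalent to $Q$ being hierarchical, completing both directions.

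I do not expect a genuine obstacle here; the statement is essentially bookkeeping. The one point requiring care is the backward direction: from $\widetilde Q$ hierarchical one must extract the condition only for pairs of original variables and confirm that the fresh-variable pairs, which are always satisfied, do not secretly encode extra information that could fail in $Q$. The singleton argument makes this clean, but it must be stated explicitly that the fresh variables never create a hierarchy violation and never absorb one. A minor subtlety worth a sentence is that each fresh $z_j$ is introduced on a distinct atom, so the $z_j$ have pairwise disjoint (singleton, distinct) atom sets and are mutually hierarchical as well.
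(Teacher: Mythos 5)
Your proof is correct and takes essentially the same approach as the paper's: both arguments rest on the observations that stretching leaves the atom sets $at(x)$ of the original variables unchanged and that each fresh variable $z_j$ occurs in exactly one atom, so every pair involving a fresh variable trivially satisfies the hierarchical condition. The only cosmetic difference is that you handle the fresh-variable pairs directly via the singleton dichotomy (containment or disjointness), whereas the paper phrases the same point as a short proof by contradiction.
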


\subsection{Dichotomy for Self-Join-Free CQs}
We prove the following dichotomy using our polynomial-time equivalences and lineage transformations:

\begin{thm}[\cite{LivshitsBKS21}]
\label{th:dichotomy-hierarchical-query}
    Let $Q$ be a self-join-free CQ. If $Q$ is hierarchical, then $\shap(\calC_Q)$ is in FP, otherwise it is FP$^{\text{\#P}}$-hard.
\end{thm}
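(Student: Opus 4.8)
The plan is to prove both directions of the dichotomy by transferring the corresponding results from probabilistic query evaluation (PQE) via the machinery of Corollary~\ref{cor:main}, Lemma~\ref{lm:lineage-stretching}, and Lemma~\ref{lm:hierarchical-stretching}. The key enabling observation is that the lineage of a CQ is a positive DNF function, and stretching at the query level realizes OR-substitutions at the lineage level, so $\widetilde{\calC_Q}$ and $\calC_{\widetilde{Q}}$ are interchangeable up to polynomial-time transformation.

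\emph{Tractable case.} Suppose $Q$ is hierarchical. First I would pass to the stretching $\widetilde Q$, which by Lemma~\ref{lm:hierarchical-stretching} is also hierarchical. For a self-join-free hierarchical CQ, the lineage admits a polynomial-size deterministic and decomposable circuit (indeed an OBDD / a read-once-style factorization following the hierarchical structure, as is standard from the safe-query tractability of PQE~\cite{DalviS04,Suciu:PDB:2011}); hence model counting $\#\calC_{\widetilde Q}$ is in FP. I want to then invoke Corollary~\ref{cor:main} to conclude $\shap(\calC_{\widetilde Q}) \in \text{FP}$. To apply the corollary I must argue that the relevant class is closed under OR-substitutions up to polynomial-time transformation, i.e.\ $\calC_{\widetilde Q} \approx^P \widetilde{\calC_{\widetilde Q}}$; this is exactly what Lemma~\ref{lm:lineage-stretching} supplies (applied to $\widetilde Q$ and its stretching), together with the fact that stretching a hierarchical query keeps it hierarchical and hence tractable for model counting. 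Finally, Lemma~\ref{lm:lineage-stretching} gives $\widetilde{\calC_Q} \approx^P \calC_{\widetilde Q}$, and since $\calC_Q \subseteq \widetilde{\calC_Q}$, tractability of Shapley over $\calC_{\widetilde Q}$ transfers back to $\shap(\calC_Q) \in \text{FP}$.

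\emph{Hard case.} Suppose $Q$ is non-hierarchical and self-join-free. The goal is to show $\shap(\calC_Q)$ is FP$^{\#\text{P}}$-hard by a reduction from model counting of positive bipartite DNF, which is $\#$P-hard~\cite{ProvanB83}. The strategy is to reverse the chain used above: by the third item of Theorem~\ref{th:main}, $\#\widetilde{\calC_Q} \leq^P \shap(\widetilde{\calC_Q})$, and by Corollary~\ref{cor:lineage-stretching_corollary} the latter is polynomial-time equivalent to $\shap(\calC_{\widetilde Q})$, which in turn reduces to $\shap(\calC_Q)$ (Shapley over the lineage of the original query). So it suffices to exhibit, for a non-hierarchical $Q$, a family of databases whose lineages realize arbitrary positive bipartite DNF formulas $\bigvee_{(i,j)\in E}(X_i \wedge Y_j)$. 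This is the classical construction used to prove $\#$P-hardness of PQE for the simplest non-hierarchical pattern $\exists x\exists y\, R^n(x)\wedge S^x(x,y)\wedge T^n(y)$ (Eq.~\eqref{eq:nonh-query}): place the left variables in $R$, the right variables in $T$, and encode the edge set $E$ in the exogenous relation $S$; the lineage is precisely $\bigvee_{(i,j)\in E}(v(R_i)\wedge v(T_j))$. For a general non-hierarchical self-join-free $Q$ one first isolates a witnessing pair of variables $x,y$ with $at(x)\not\subseteq at(y)$, $at(y)\not\subseteq at(x)$, $at(x)\cap at(y)\neq\emptyset$, and then instantiates all other atoms/variables by constants (using exogenous tuples set to $1$) so that the lineage collapses to this bipartite DNF shape.

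The main obstacle I anticipate is the hard direction, specifically the careful database construction that forces the lineage of an \emph{arbitrary} non-hierarchical self-join-free CQ down to a positive bipartite DNF over endogenous variables, while ensuring all the ``auxiliary'' atoms contribute $1$ (are exogenous or trivially satisfied) and do not introduce spurious variables or clauses. One must verify that model counting over the resulting lineage is genuinely equivalent to counting models of the target bipartite DNF, and that the reduction is polynomial in data complexity with $Q$ fixed. Once this reduction is in place, the FP$^{\#\text{P}}$-hardness follows by chaining $\#$bipartite-DNF $\leq^P \#\widetilde{\calC_Q} \leq^P \shap(\widetilde{\calC_Q}) \equiv^P \shap(\calC_{\widetilde Q}) \leq^P \shap(\calC_Q)$, completing the dichotomy.
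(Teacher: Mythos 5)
Your proposal follows the same architecture as the paper (stretching as the query-level counterpart of OR-substitution, chaining Theorem~\ref{th:main} with Lemma~\ref{lm:lineage-stretching} and Lemma~\ref{lm:hierarchical-stretching}, hardness from positive bipartite DNF~\cite{ProvanB83}), but there is a genuine gap in the hard case: the step $\shap(\calC_{\widetilde Q}) \leq^P \shap(\calC_Q)$ is simply asserted, with no justification, and it is exactly the point where the paper needs Claim~\ref{claim:equivalence-lineage-hierarchical}, namely that $\calC_{\widetilde Q} = \calC_Q$ for the query of Eq.~\eqref{eq:nonh-query}. This is not automatic: $\widetilde Q$ has higher-arity endogenous relations, and one must show that every lineage of $\widetilde Q$ is realizable as a lineage of $Q$ over a polynomial-time constructible database. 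The paper does this in Appendix~\ref{sec:claim:equivalence-lineage-hierarchical} by folding each pair $(z_1\colon a', x\colon a)$ into a composite value of a single attribute and rebuilding the exogenous relation $S$ over the composite domains so that exactly the right clauses appear. Without this step your chain only proves FP$^{\text{\#P}}$-hardness of $\shap(\calC_{\widetilde Q})$, i.e., for the \emph{stretched} query; and you cannot evade the issue by running the generalization to arbitrary non-hierarchical $Q'$ from $\widetilde Q$ instead of $Q$, because $Q$ itself is a non-hierarchical query the theorem must cover, and realizing its hardness from that of $\widetilde Q$ is again the missing inclusion $\calC_{\widetilde Q} \subseteq \calC_Q$. (A smaller slip nearby: the third item of Theorem~\ref{th:main} gives $\#\calC_Q \leq^P \shap(\widetilde{\calC_Q})$, not $\#\widetilde{\calC_Q} \leq^P \shap(\widetilde{\calC_Q})$; this is benign here since the positive bipartite DNF functions are exactly $\calC_Q$, so you should start the chain from $\#\calC_Q$.)

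The tractable case also contains an inaccuracy, though a repairable one. You invoke Corollary~\ref{cor:main} by claiming that Lemma~\ref{lm:lineage-stretching} supplies $\calC_{\widetilde Q} \approx^P \widetilde{\calC_{\widetilde Q}}$. It does not: applied to $\widetilde Q$, the lemma gives $\widetilde{\calC_{\widetilde Q}} \approx^P \calC_{\widetilde{\widetilde Q}}$, which relates the substituted class to the lineage class of the \emph{doubly} stretched query --- a different statement, and the closure $\calC_{\widetilde Q} \approx^P \widetilde{\calC_{\widetilde Q}}$ is not proved anywhere in general. The paper avoids needing it by chaining the one-directional reductions of Theorem~\ref{th:main} rather than the full equivalence: $\shap(\calC_Q) \leq^P \#_*\widetilde{\calC_Q} \equiv^P \#_*\calC_{\widetilde Q}$ (Corollary~\ref{cor:lineage-stretching_corollary}), then $\#_*\calC_{\widetilde Q} \leq^P \#\widetilde{\calC_{\widetilde Q}} \equiv^P \#\calC_{\widetilde{\widetilde Q}}$, which is in FP because stretching preserves the hierarchical property (Lemma~\ref{lm:hierarchical-stretching}, applied twice) and model counting for hierarchical queries is in FP~\cite{OlteanuH:PDB:2008}. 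Your OBDD-based justification for tractable counting is consistent with that citation, so once you replace the appeal to Corollary~\ref{cor:main} with this chain, the tractable direction goes through.
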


The hardness result holds for specific classes of databases, where we can choose conveniently the endogenous and exogenous relations, whereas the tractability result holds for any database.
We first focus on hardness and later on tractability.

\paragraph{Hardness} We show that for any non-hierarchical CQ $Q$, there are specific classes of databases for which $\shap(\calC_Q)$ is FP$^{\text{\#P}}$-hard. We first show the hardness for the smallest non-hierarchical CQ  and then generalize to arbitrary non-hierarchical CQs.

Let us consider the smallest non-hierarchical CQ in Eq.~\eqref{eq:nonh-query} and its stretching in Eq.~\eqref{eq:nonh-stretching}, where we choose conveniently the relations $R$ and $T$ to be endogenous, while the relation $S$ be exogenous.

The class $\calC_Q$ consists of all positive bipartite functions in disjunctive normal form: $\bigvee_{(i,j)\in S} X_i\wedge Y_j$, where $X_i$ annotates tuple $R(i)$ and $Y_j$ annotates tuple $T(j)$. Any such function can be obtained by appropriately picking $R$ and $T$ for the sets of variables $X_i$ and $Y_j$, and $S$ to encode its clauses. We next use a prior result on the $\#P$-hardness for model counting for this class of functions~\cite{ProvanB83}:

\begin{align*}
    &\  \#\calC_Q  \leq^P \shap(\widetilde{\calC_Q}) & \text{(by Theorem~\ref{th:main})} \\
    \Rightarrow&\  \#\calC_Q  \leq^P \shap(\calC_{\widetilde Q})  & \text{(by Corollary~\ref{cor:lineage-stretching_corollary})}\\
    \Rightarrow&\  \#\calC_Q  \leq^P \shap(\calC_Q)  & \text{(by Claim~\ref{claim:equivalence-lineage-hierarchical} below)}\\
    \Rightarrow&\  \shap(\calC_Q) \text{ is FP}^{\text{\#P}}\text{-hard} & 
    \text{ (}\#\calC_Q \text{ is } \#P\text{-hard}~\text{\cite{ProvanB83})}
\end{align*}

\nop{
When applied to the lineage class $\calC_Q$, Theorem~\ref{th:main} states that $\#\calC_Q  \leq^P \shap(\widetilde{\calC_Q})$. From Lemma~\ref{lm:lineage-stretching}, we have $\widetilde{\calC_Q} = \calC_{\widetilde Q}$. This implies that $\#\calC_Q  \leq^P \shap(\calC_{\widetilde Q})$. Since $\#\calC_Q$ is $\#P$-hard for the class of all PP2DNF functions~\cite{ProvanB83}, this means that $\shap(\calC_{\widetilde Q})$ is FP$^{\#P}$-hard. In Claim~\ref{claim:equivalence-lineage-hierarchical} below, we show that $\calC_{\widetilde Q} = \calC_{Q}$, which concludes the hardness of the query in Eq.~\eqref{eq:nonh-query}.
}

\begin{claim}\label{claim:equivalence-lineage-hierarchical}
    $\calC_{\widetilde Q} = \calC_{Q}$ holds for the non-hierarchical query $Q$ in Eq.~\eqref{eq:nonh-query} and its stretching $\widetilde Q$ in Eq.~\eqref{eq:nonh-stretching}.
\end{claim}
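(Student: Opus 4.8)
The plan is to show that both $\calC_Q$ and $\calC_{\widetilde Q}$ coincide with the class of positive bipartite DNF (PP2DNF) functions, from which the claimed equality follows. The text above has already argued that $\calC_Q$ is exactly the class of PP2DNF functions $\bigvee_{(i,j)\in E} X_i \wedge Y_j$, obtained by letting the variables $X_i$ annotate the tuples of $R$, the variables $Y_j$ annotate the tuples of $T$, and the exogenous relation $S$ encode the set $E$ of clauses. It therefore suffices to prove that $\calC_{\widetilde Q}$ equals the same class.

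For the inclusion $\calC_{\widetilde Q} \subseteq \text{PP2DNF}$, I would inspect the lineage of $\widetilde Q$ directly. Since $S^x$ is exogenous and the only endogenous atoms are $R^n(z_1,x)$ and $T^n(z_2,y)$, every clause of $F_{\widetilde Q, \widetilde{\bm D}}$ is, by the recursive definition of lineage, a conjunction $v(R(z_1,x)) \wedge v(T(z_2,y))$ of exactly one $R$-variable and one $T$-variable, with the two variable sets disjoint. Hence every function in $\calC_{\widetilde Q}$ is a positive bipartite DNF.

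For the reverse inclusion, I would realize an arbitrary PP2DNF $G = \bigvee_{(i,j)\in E} X_i \wedge Y_j$ as a lineage of $\widetilde Q$. The key idea is to make the join values unique so that the naturally block-structured lineage of the stretched query collapses to the prescribed individual clauses. Concretely, for each $X_i$ put a tuple $R(\alpha_i, \beta_i)$ annotated by $X_i$, for each $Y_j$ put a tuple $T(\gamma_j, \delta_j)$ annotated by $Y_j$, where all constants $\alpha_i, \beta_i, \gamma_j, \delta_j$ are distinct, and for each $(i,j)\in E$ put the exogenous tuple $S(\beta_i, \delta_j)$. Because each $\beta_i$ selects a unique $R$-tuple and each $\delta_j$ a unique $T$-tuple, the lineage becomes exactly $\bigvee_{(i,j)\in E} X_i \wedge Y_j = G$, so $G \in \calC_{\widetilde Q}$.

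Combining the two inclusions gives $\calC_{\widetilde Q} = \text{PP2DNF} = \calC_Q$. The main subtlety is the reverse inclusion: a priori the stretching relaxes the query and could enlarge the class of expressible functions, since for a fixed $S$-pair the lineage contributes a complete-bipartite block over all matching $R$- and $T$-tuples. The construction above controls this by assigning distinct second coordinates to distinct endogenous tuples, which forces each such block to reduce to a single clause; verifying that this yields precisely $G$ is the crux of the argument.
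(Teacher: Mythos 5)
Your proof is correct, but it takes a genuinely different route from the paper's. You factor the equality through an explicit characterization of both classes as exactly the positive bipartite DNF (PP2DNF) functions: the inclusion $\calC_{\widetilde Q}\subseteq\text{PP2DNF}$ is a direct structural reading of the lineage of $\widetilde Q$ (every clause pairs exactly one $R$-variable with one $T$-variable), and the reverse inclusion is your padding construction with pairwise distinct join values $\beta_i,\delta_j$, which correctly collapses each potential complete-bipartite block to a single clause --- you rightly identify this collapsing as the crux. The paper instead proves the two inclusions $\calC_Q\subseteq\calC_{\widetilde Q}$ and $\calC_{\widetilde Q}\subseteq\calC_Q$ by direct database-to-database transformations: for the first it pads the fresh attributes $z_1,z_2$ with a \emph{single} dummy value $d$ (your distinct constants also work, though one value suffices); for the second --- the direction you bypass entirely --- it re-encodes each tuple $(z_1{:}\,a',\,x{:}\,a)$ of $\widetilde R^n$ as a composite value $x'=(a',a)$ (likewise for $\widetilde T^n$) and defines a new exogenous relation $S^x_{\text{new}}$ as the three-way join of $\widetilde R^n$, $S^x$, and $\widetilde T^n$, thereby reproducing the lineage verbatim without analyzing its syntactic shape. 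Your route is arguably more transparent in this instance, since it makes explicit the PP2DNF characterization that the hardness argument needs anyway, and it does lean on the main text's assertion that $\calC_Q$ is exactly the PP2DNF class (which the paper states but whose realizability half is the same kind of construction you give); the paper's composite-value encoding is self-contained and is the more reusable trick, since it works directly at the level of databases rather than of a specific normal form. One shared convention to be aware of (it affects both proofs equally, so it is not a gap in yours): endogenous tuples that join nothing contribute variables absent from the DNF, so the identification of lineages with PP2DNF formulas is up to such vacuous variables.
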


The proof of Claim~\ref{claim:equivalence-lineage-hierarchical} is in  Appendix~\ref{sec:claim:equivalence-lineage-hierarchical}.

The generalization to arbitrary non-hierarchical CQs is as in prior work~\cite{DalviS04,LivshitsBKS21}. We reduce the computation of $Q$ in Eq.\eqref{eq:nonh-query} over any database $\bm D$ to the computation of any non-hierarchi\-cal query $Q'$ over a specifically-designed database $\bm D'$ 
constructed from $\bm D$.

By definition, the non-hierarchical query $Q$ has two variables $x$ and $y$ such that $at(x)\cap at(y)\neq\emptyset$, $at(x)\not\subseteq at(y)$, and $at(y)\not\subseteq at(x)$. We construct $\bm D'$ as follows. We pick two distinct atoms in $Q$, call them $R$ and $T$, such that: $R$ has $x$ and not $y$, and $T$ has $y$ and not $x$. We make the relations of these two atoms endogenous and all other relations exogenous. The values for all other variables are set to the same constant, say $1$, while the values of $x$ in $R$ and of $y$ in $T$ are precisely those in the database $\bm D$. The $x$ ($y$) columns in the other relations in $\bm D'$ are copies of the corresponding columns in $R$ ($T$), so the semi-joins of $R$ ($T$) with its copies do not alter $R$ ($T$). Then, the lineage of $Q$ and $\widetilde Q$ over $\bm D$ and respectively $\bm D'$ is the same: $F_{Q,D} = F_{Q',D'}$. The hardness of $\#\calC_Q$ thus transfers to $\#\calC_{Q'}$.

\paragraph{Tractability}
We show that $\shap(\calC_Q)$ is in FP for any hierarchical CQ $Q$. 
We use that $\#\calC_Q$ is tractable for any hierarchical  $Q$~\cite{OlteanuH:PDB:2008}:
\begin{claim}
\label{claim:count_star_hierarchical_FP}
For any hierarchical CQ $Q$,  $\#_*\calC_Q$ is in FP. 
\end{claim}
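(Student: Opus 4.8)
The plan is to reduce the fixed-size model counting problem $\#_*\calC_Q$ to plain model counting and then invoke the known tractability of model counting for hierarchical queries. The key observation is that all the required machinery is already available: Lemma~\ref{lemma:1} turns fixed-size counting into plain counting over the OR-substituted class, Corollary~\ref{cor:lineage-stretching_corollary} identifies this OR-substituted class with the lineage class of the stretched query, and Lemma~\ref{lm:hierarchical-stretching} guarantees that the stretched query remains hierarchical. So the proof is essentially a chaining of these three facts with the cited result~\cite{OlteanuH:PDB:2008}.

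Concretely, I would proceed as follows. First, apply Lemma~\ref{lemma:1} with $\calC = \calC_Q$ to obtain $\#_*\calC_Q \leq^P \#\widetilde{\calC_Q}$; this reduction makes $n+1$ oracle calls on the functions $F^{(\ell)}\in\widetilde{\calC_Q}$ for $\ell\in[n+1]$, where $n$ is the number of endogenous tuples and hence polynomial in $|\bm D|$, and then solves a Vandermonde system. Second, by Corollary~\ref{cor:lineage-stretching_corollary} we have $\#\widetilde{\calC_Q}\equiv^P\#\calC_{\widetilde Q}$, so it suffices to establish that $\#\calC_{\widetilde Q}$ is in FP. Third, since $Q$ is hierarchical, Lemma~\ref{lm:hierarchical-stretching} ensures that $\widetilde Q$ is hierarchical as well, and the tractability of model counting for hierarchical queries~\cite{OlteanuH:PDB:2008} then yields $\#\calC_{\widetilde Q}\in\text{FP}$. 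Combining the three steps gives $\#_*\calC_Q\in\text{FP}$.

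The main point to verify carefully is that everything stays polynomial in data complexity, rather than being merely an oracle reduction. Each OR-substitution $F^{(\ell)}$ blows up the lineage size by a factor of at most $\ell\le n+1$; each count $\#F^{(\ell)}$ is obtained by first transforming $F^{(\ell)}$ into an equivalent function in $\calC_{\widetilde Q}$ in polynomial time (the transformation underlying Lemma~\ref{lm:lineage-stretching}) and then running the FP algorithm for the hierarchical query $\widetilde Q$; and the final Vandermonde solve is polynomial. There is no genuine combinatorial obstacle here: the entire content of the claim lies in stitching together Lemma~\ref{lemma:1}, Corollary~\ref{cor:lineage-stretching_corollary}, Lemma~\ref{lm:hierarchical-stretching}, and the cited tractability result, and in confirming that the polynomial-time data-complexity bound is preserved along the chain.
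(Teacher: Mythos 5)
Your proof is correct and follows essentially the same route as the paper's: both chain Lemma~\ref{lemma:1} (the $\#_*\calC_Q \leq^P \#\widetilde{\calC_Q}$ part of Theorem~\ref{th:main}), Corollary~\ref{cor:lineage-stretching_corollary}, Lemma~\ref{lm:hierarchical-stretching}, and the tractability of model counting for hierarchical queries~\cite{OlteanuH:PDB:2008}, merely presented in the reverse order. Your added check that the reduction stays polynomial in data complexity is a sensible elaboration of what the paper relegates to its standing assumption from Section~\ref{sec:main-results-functions}, not a deviation in approach.
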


\begin{proof}
\begin{align*}
& \ Q \text{ is hierarchical} \\
\Rightarrow &\  \widetilde{Q} \text{ is hierarchical} & \text{(by Lemma~\ref{lm:hierarchical-stretching})}\\
\Rightarrow &\  \#\calC_{\widetilde{Q}} \text{ is in FP} & \text{(by \cite{OlteanuH:PDB:2008})}\\
\Rightarrow &\  \#\widetilde{\calC_{Q}} \text{ is in FP} & \text{(by Corollary~\ref{cor:lineage-stretching_corollary})}\\
\Rightarrow &\  \#_*\calC_Q \text{ is in FP} & \text{(by Theorem~\ref{th:main})}
\end{align*}
\end{proof}

Tractability of $\shap(\calC_Q)$ is now an immediate implication: 
\begin{align*}
&\  Q \text{ is hierarchical} \\
\Rightarrow &\  \widetilde{Q} \text{ is hierarchical} & \text{(by Lemma~\ref{lm:hierarchical-stretching})}\\
\Rightarrow &\  \#_*\calC_{\widetilde{Q}} \text{ is in FP} & \text{(by Claim~\ref{claim:count_star_hierarchical_FP})}\\
\Rightarrow &\  \#_*\widetilde{\calC_{Q}} \text{ is in FP} & \text{(by Corollary~\ref{cor:lineage-stretching_corollary})}\\
\Rightarrow &\  \shap(\calC_Q) \text{ is in FP} & \text{(by Theorem~\ref{th:main})}
\end{align*}

\nop{
Let $\widetilde Q$ be the stretching of $Q$. Following Lemma~\ref{lm:hierarchical-stretching}, $\widetilde Q$ is also hierarchical.
When applied to the lineage class $\calC_Q$, Theorem~\ref{th:main} states that $\#_*\calC_Q \leq^P \#\widetilde{\calC_Q}$. From Lemma~\ref{lm:lineage-stretching}, we have $\widetilde{\calC_Q} = \calC_{\widetilde Q}$. This implies that $\#_*\calC_Q \leq^P \#\calC_{\widetilde Q}$. Since $\#\calC_Q$ is in FP for any hierarchical query $Q$~\cite{OlteanuH:PDB:2008}, and therefore $\#\calC_{\widetilde Q}$ is also in FP, this means that $\#_*\calC_Q$ is in FP. In Claim~\ref{claim:size-count} below, we show that $\#_*\calC_{\widetilde Q}$ is also in FP. By Theorem~\ref{th:main}, $\shap(\calC_Q) \leq^P \#_*\widetilde{\calC_Q}$, which becomes $\shap(\calC_Q) \leq^P \#_*\calC_{\widetilde Q}$ by Lemma~\ref{lm:lineage-stretching}. Therefore, $\shap(\calC_Q)$ is in FP.

\begin{claim}\label{claim:size-count}
    Let a hierarchical query $Q$ and its stretching $\widetilde Q$. Then, 
    $\#_*\calC_{\widetilde Q}$ is in FP.
\end{claim}

Proof of the claim comes here.

[We know: $\#\calC_{\widetilde Q}$ is in FP;  $\#_*\calC_Q$ is in FP; $\#\calC_Q$ is in FP]
}

\paragraph{Discussion} The above hardness proof is significantly simpler than the original one~\cite{LivshitsBKS21}, which solves several instances of computing the number of independent sets of a given bipartite graph and assembles them in a full-rank set of linear equations. In fact, the original proof questions\footnote{First paragraph in the proof of Proposition 4.6~\cite{LivshitsBKS21}: "It is not at all clear to us how such an approach can work in our case and, indeed, our proof is more involved".} whether a simple proof based on the hardness of model counting for positive bipartite DNF, as used to show the hardness of the non-hierarchical queries over probabilistic databases and also used in our proof above, is even possible. Our result settles this question in the affirmative.

\section{Conclusion and Future Work}

In this paper we give a polynomial-time equivalence between computing Shapley values  and model counting for any class of Boolean functions that are closed under substitutions of variables with disjunctions of fresh variables. This result settles an open problem raised in prior work. We also show two direct applications of our result: tractability of Shapley value computation for deterministic and decomposable circuits and the dichotomy for Shapley value computation in case of self-join-free Boolean conjunctive queries.

We conjecture that our work can be instrumental to show that the dichotomy for unions of conjunctive queries in probabilistic databases~\cite{DalviS12} also applies to Shapley value computation. Furthermore, we would like to understand the impact of more complex substitutions on the tractability of both model counting and of Shapley value computation.

\paragraph{Acknowledgements.}
Ahmet Kara and Dan Olteanu would like to acknowledge Daniel Deutch, who introduced them to the topic of Shapley values in databases.

\bibliographystyle{abbrv}
\bibliography{bibliography}

\appendix
\section{Missing Details in Section~\ref{sec:prelims}}
\label{app:prelims}
\subsection{Proof of Proposition~\ref{prop:alternative_shap}}
\textsc{Proposition} \ref{prop:alternative_shap} (\cite{LivshitsBKS21} page 11, adapted).
\textit{The {\em Shapley value} of a variable $X_i$ of a Boolean function $F$ is:
\begin{align*}
  \shap(F,X_i) = & \sum_{k=0}^{n-1} c_k \left(\#_kF[X_i:=1]-\#_kF[X_i:=0]\right)
\end{align*}
%
where $c_k = \frac{k! (n-k-1)!}{n!}$.
}

\medskip
We show this proposition as follows:
\begin{align*}
  \shap(F,X_i) &\overset{(a)}{=}  \frac{1}{n!} 
  \sum_{\Pi \in S_n} \left(F[\Pi^{<i} \cup \set{i}]-F[\Pi^{<i}]\right) \\  
    & \overset{(b)}{=}  \frac{1}{n!} \sum_{T \subseteq [n]-\{i\}} |T|! (n-|T|-1)!  
    \left(F[T \cup\{i\}] - F[T]\right) \\ 
        & \overset{(c)}{=}  \frac{1}{n!} \sum_{k = 0}^{n-1} |k|! (n-|k|-1)!  
    \left( \#_kF[X_i:=1]-\#_kF[X_i:=0] \right) \\
    & \overset{(d)}{=}  \sum_{k=0}^{n-1} c_k \left(\#_kF[X_i:=1]-\#_kF[X_i:=0]\right)
\end{align*}
Equality (a) holds by definition.
We obtain Equality (b) by grouping the sum by possible sets $T \subseteq [n]-\{i\}$ and scaling the result of $F[T \cup\{i\}] - F[T]$
by the number of permutations
of the set
$\{1, \ldots , n\}$
that start with the values in $T$ followed by $i$. Observe that $|T|! (n-|T|-1)!$ is the number of permutations of the set 
$\{1, \ldots , n\}$ that start with 
the values in $T$ followed by $i$. 
To obtain Equality (c), we iterate over the sizes of possible 
sets $T \subseteq [n]-\{i\}$
and observe that the number of sets 
$T \subseteq [n]-\{i\}$ of size $k$ such that 
$F[T \cup \{i\}] = 1$ 
is exactly $\#_kF[X_i:=1]$; similarly, 
the number of sets $T$ of size $k$ such that $F[T] = 1$ is 
$\#_kF[X_i:=0]$.
We obtain Equality (d) by moving $\frac{1}{n!}$ inside the sum and
replacing $\frac{k! (n-k-1)!}{n!}$ by $c_k$.

\subsection{Proof of Proposition~\ref{prop:sum_shap}}
\textsc{Proposition}~\ref{prop:sum_shap}.
\textit{For any Boolean function $F$, it holds
\begin{align*}
  \sum_{i\in[n]}\shap(F,X_i) = & F[\bm 1] - F[\bm 0]
\end{align*}
where $\bm 1$ is the valuation that maps all variables to $1$, and
$\bm 0$ the valuation that maps all variables to $0$.
}

\medskip
We show this proposition as follows:
\begin{align*}
  \sum_{i = 1}^n \shap(F,X_i) \overset{(a)}{=} & \sum_{i=1}^n \sum_{k=0}^{n-1} c_k \left(\#_kF[X_i:=1]-\#_kF[X_i:=0]\right) \\
  = & \sum_{k=0}^{n-1} \sum_{i= 1}^n c_k  \left( \#_kF[X_i:=1]-\#_kF[X_i:=0]\right) \\
  \overset{(b)}{=} & \sum_{k=0}^{n-1} \left(c_k
  (k+1) \#_{k+1}F -c_k (n-k)\#_kF\right) \\  
= &\  c_0\cdot \#_1 F - \underline{c_0 \cdot n\cdot \#_0 F}\ + \\
& \ c_1 \cdot 2 \cdot \#_2 F - c_1\cdot (n-1)\cdot \#_1 F \ + \cdots + \\
& \ \overline{c_{n-1}\cdot n\cdot \#_n F} - c_{n-1}\cdot  \#_{n-1} F \\
\overset{(c)}{=} &\  \overline{c_{n-1}\cdot n\cdot \#_n F} -  
\underline{c_0 \cdot n\cdot \#_0 F} + \\
& \ \sum_{k=0}^{n-2}\left( c_{k}(k+1) \#_{k+1}F - c_{k+1} (n - k -1) \#_{k+1} F \right)\\
%
\overset{(d)}{=} &\ 
c_{n-1}\cdot n\cdot \#_n F -  
c_0 \cdot n\cdot \#_0 F\\
\overset{(e)}{=} &\ F[\bm 1] - F[\bm 0] 
\end{align*}
Equality (a) uses the Shapley value characterization given in 
Proposition~\ref{prop:alternative_shap}.
Equality (b) follows from the two 
Equalities 
\eqref{eq:count1} and \eqref{eq:count2}
in Section \ref{sec:main_proof}.
We obtain Equality (c) by regrouping the terms on the left-hand side: We keep  
$c_{n-1}\cdot n\cdot \#_n F -  
c_0 \cdot n\cdot \#_0 F$
outside the scope of 
the sum and pair  
the terms $c_{k}(k+1) \#_{k+1}F$ and $c_{k+1} (n - k -1) \#_{k+1} F$ for $0 \leq k \leq n-2$
within the scope of the sum.
Equality (d) holds, since for each $k$, 
the two terms within the scope of the sum cancel each other.
This cancelling is due to the following equalities: $c_{k}(k+1) = 
\frac{k!(n-k-1)!}{n!} (k+1)= 
\frac{(k+1)!(n-k-1)!}{n!}= 
\frac{(k+1)!(n-k-2)!}{n!} (n - k -1)
= c_{k+1} (n - k -1)$.
Equality (e) follows from the equalities
$c_{n-1}\cdot n = c_{0}\cdot n = \frac{(n-1)!}{n!}n = 1$
and the observation that  $F$ can have at most one model of size $n$ and at most one model of size $0$.

\section{Missing Details in Section~\ref{sec:queries}}
\label{app:queries}
We introduce notation used in the following.
Given a relation $R$ over some attributes
$(y_1, \ldots, y_n)$, we write $(y_1: a_1, \ldots, y_n: a_n)$
to denote a tuple in $R$ where the $y_i$ value is $a_i$ for $i \in [n]$.

\subsection{Proof of Claim~\ref{claim:equivalence-lineage-hierarchical}}
\label{sec:claim:equivalence-lineage-hierarchical}
\textsc{Claim} \ref{claim:equivalence-lineage-hierarchical}.
 \textit{$\calC_{\widetilde Q} = \calC_{Q}$ holds for the non-hierarchical query $Q$ in Eq.~\eqref{eq:nonh-query} and its stretching $\widetilde Q$ in Eq.~\eqref{eq:nonh-stretching}.}

\medskip

We first illustrate how we can construct 
databases to show that each 
lineage 
in $\calC_{Q}$ 
is also a lineage in 
$\calC_{\widetilde Q}$
and vice versa.

\begin{example}
Consider the following database $\bm D$, 
where the variables $Y_i$ preceding the tuples in endogenous relations are associated to the tuples.

\begin{center}
\begin{tikzpicture}

\node (D) at (-2,0.7) {$\bm D:$};

\node (R) at (-1,0.7) {$R^n(x)$};

\node (tableR) at (-1,0) {
\begin{tabular}{c}
$x$ \\\hline
 $a_1$ \\
 $a_2$
\end{tabular}
};

\node (lineageR) at (-1.5,0) {
\begin{tabular}{c}
\phantom{a}\\
$Y_1:$  \\
$Y_2:$
\end{tabular}
};

\node (S) at (1.5,0.7) {$S^x(x,y)$};
\node (tableS) at (1.5,0) {
\begin{tabular}{cc}
$x$ & $y$ \\\hline
 $a_1$ & $b_1$ \\
 $a_2$ & $b_2$
\end{tabular}
};

\node (T) at (4,0.7) {$T^n(y)$};
\node (tableT) at (4,0) {
\begin{tabular}{c}
$y$ \\\hline
 $b_1$ \\
 $b_2$
\end{tabular}
};

\node (lineageT) at (3.5,0) {
\begin{tabular}{c}
\phantom{a}\\
$Y_3:$  \\
$Y_4:$
\end{tabular}
};

\end{tikzpicture}
\end{center}
The lineage of $Q$ over $\bm D$ is
$F_{Q,\bm D} = 
(Y_1 \wedge Y_3) \vee
(Y_2 \wedge Y_4)$.
It holds $F_{Q,\bm D} \in \calC_Q$. 
Now, we construct from $\bm D$ a database 
$\widetilde{\bm D}$ such that 
$F_{Q,\bm D}$ is the lineage 
of $\widetilde{Q}$ over
$\widetilde{\bm D}$.
The idea is to assign to the fresh attributes added due to stretching 
a dummy value $d$: 

\begin{center}
\begin{tikzpicture}
\node (D) at (-2, -1.2) {$\widetilde{\bm D}:$};

\node (R) at (-1,-1.2) {$\widetilde{R}^n(z_1,x)$};

\node (tableR) at (-1,-2) {
\begin{tabular}{cc}
$z_1$ & $x$ \\\hline
$d$ & $a_1$ \\
$d$ & $a_1$
\end{tabular}
};

\node (lineageR) at (-1.7,-2) {
\begin{tabular}{c}
\phantom{a}\\
$Y_1:$  \\
$Y_2:$
\end{tabular}
};

\node (S) at (1.5,-1.2) {$S^x(x,y)$};
\node (tableS) at (1.5,-2) {
\begin{tabular}{cc}
$x$ & $y$ \\\hline
 $a_1$ & $b_1$ \\
 $a_2$ & $b_2$
\end{tabular}
};

\node (T) at (4,-1.2) {$\widetilde{T}^n(z_2,y)$};
\node (tableT) at (4,-2) {
\begin{tabular}{cc}
$z_2$ & $y$ \\\hline
 $d$ & $b_1$ \\
 $d$ & $b_2$
\end{tabular}
};

\node (lineageT) at (3.3,-2) {
\begin{tabular}{c}
\phantom{a}\\
$Y_3:$  \\
$Y_4:$
\end{tabular}
};
\end{tikzpicture}
\end{center}

Now, consider the following database
$\widetilde{\bm D}'$ with stretched relations:  
\begin{center}
\begin{tikzpicture}
\node (D) at (-2, -1.2) 
{$\widetilde{\bm D}'$:};

\node (R) at (-1,-1.2) {$\widetilde{R}^n(z_1,x)$};

\node (tableR) at (-1,-2) {
\begin{tabular}{cc}
$z_1$ & $x$ \\\hline
$d_1$ & $a$ \\
$d_2$ & $a$
\end{tabular}
};

\node (lineageR) at (-1.8,-2) {
\begin{tabular}{c}
\phantom{a}\\
$Y_1:$  \\
$Y_2:$
\end{tabular}
};

\node (S) at (1.5,-1.2) {$S^x(x,y)$};
\node (tableS) at (1.5,-1.8) {
\begin{tabular}{cc}
$x$ & $y$ \\\hline
 $a$ & $b$
\end{tabular}
};

\node (T) at (4,-1.2) {$\widetilde{T}^n(z_2,y)$};
\node (tableT) at (4,-2) {
\begin{tabular}{cc}
$z_2$ & $y$ \\\hline
 $d_1$ & $b$ \\
 $d_2$ & $b$
\end{tabular}
};

\node (lineageT) at (3.2,-2) {
\begin{tabular}{c}
\phantom{a}\\
$Y_3:$  \\
$Y_4:$
\end{tabular}
};
\end{tikzpicture}
\end{center}
The lineage of $\widetilde{Q}$ over 
$\widetilde{\bm D}'$ is
$F_{\widetilde{Q},\widetilde{\bm D}'} = 
(Y_1 \wedge Y_3) \vee
(Y_1 \wedge Y_4) \vee
(Y_2 \wedge Y_3) \vee
(Y_2 \wedge Y_4)$.
It holds $F_{\widetilde{Q},
\widetilde{\bm D}'} \in \calC_{\widetilde{Q}}$. 
We construct now from 
$\widetilde{\bm D}'$ a database 
${\bm D}'$ such that
$F_{\widetilde{Q},
\widetilde{\bm D}'}$ is 
is a lineage 
of $Q$ over
${\bm D}'$.
The idea is to represent tuples 
over $(z_1,x)$ and $(z_2,y)$ as
single (composite) values over $x$
and respectively $y$ and construct $S$ such that the combinations of $(z_1,x)$ and $(z_2,y)$ remain the same as in $\widetilde{\bm D}'$: 

\begin{center}
\begin{tikzpicture}
\node (D) at (-2, -1.2) 
{${\bm D}'$:};

\node (R) at (-1,-1.2) {$R^n(x)$};

\node (tableR) at (-1,-2) {
\begin{tabular}{c}
$x$  \\\hline
$(d_1,a)$  \\
$(d_2,a)$ 
\end{tabular}
};

\node (lineageR) at (-1.8,-2) {
\begin{tabular}{c}
\phantom{a}\\
$Y_1:$  \\
$Y_2:$
\end{tabular}
};

\node (S) at (2,-1.2) {$S^x(x,y)$};
\node (tableS) at (2,-2.4) {
\begin{tabular}{cc}
$x$ & $y$ \\\hline
 $(d_1, a)$ & $(d_1,b)$\\
  $(d_1, a)$ & $(d_2,b)$ \\
   $(d_2, a)$ & $(d_1,b)$ \\
  $(d_2, a)$ & $(d_2,b)$
\end{tabular}
};

\node (T) at (5.2,-1.2) {$T^n(y)$};
\node (tableT) at (5.2,-2) {
\begin{tabular}{c}
 $y$ \\\hline
 $(d_1, b)$ \\
 $(d_2, b)$
\end{tabular}
};

\node (lineageT) at (4.4,-2) {
\begin{tabular}{c}
\phantom{a}\\
$Y_3:$  \\
$Y_4:$
\end{tabular}
};
\end{tikzpicture}
\end{center}
\end{example}

Next, we prove Claim~\ref{claim:equivalence-lineage-hierarchical} formally. 
Consider the non-hierarchical CQ 
$Q = \exists x \exists y \ R^n(x) \wedge S^x(x,y) \wedge T^n(y)$
in 
Eq.~\eqref{eq:nonh-query} and its stretching 
$\widetilde{Q} = \exists x \exists y \exists z_1 \exists z_2\ R^n(z_1,x) \wedge S^x(x,y) \wedge T^n(z_2,y)$
in Eq.~\eqref{eq:nonh-stretching}.
We first show that $C_Q \subseteq C_{\widetilde{Q}}$ and 
then we show $C_{\widetilde{Q}} \subseteq C_{Q}$. 

\subsubsection{$C_Q \subseteq C_{\widetilde{Q}}$}
\label{sec:RST_lineage_in_substituted_RST_lineage}
Consider the lineage $F_{Q,\bm D} \in \calC_Q$
for a database $\bm D = \{R^n,S^x,T^n\}$. We show that 
$F_{Q,\bm D} \in \calC_{\widetilde{Q}}$.
We construct from $\bm D$ a database 
$\widetilde{\bm D} =  
\{\widetilde{R}^n,S^x,\widetilde{T}^n\}$
as follows.
Assume that $R^n$ is defined over the attribute $x$,
$S^x$ is defined over the attributes 
$(x,y)$, and $T^n$ is defined over the 
attribute $y$. Relation $S^x$ remains unchanged.
We transform relation $R^n$ into the relation $\widetilde{R}^n$ over the 
attributes $(z_1,x)$ for a new attribute $z_1$.
The relation $\widetilde{R}^n$ consists of the tuples 
$\{(z_1:d, x:a) | (x:a) \in R^n\}$, where 
$d$ is a fresh dummy value.
If a variable in $F_{Q,\bm D}$ is associated with the tuple
$(x:a)$ in $R^n$, we associate the same variable
with the tuple $(z_1:d, x:a)$ in $\widetilde{R}^n$.
Similarly, we transform relation $T^n$ into the relation
$\widetilde{T}^n$ over the attributes $(z_2,y)$ for a new attribute $z_2$.
The relation $\widetilde{T}^n$ consists of the tuples 
$\{(z_2:d,y:b)| (y:b) \in T^n\}$.  
If a variable in $F_{Q,\bm D}$ is associated with 
the tuple $(y:b)$ in $T^n$, we associate the same 
variable
with the tuple $(z_2:d, y:b)$ in 
$\widetilde{T}^n$.
Observe that $F_{Q,\bm D}$ is the lineage
of $\widetilde{Q}$ over $\widetilde{\bm D}$, which means that
$F_{Q,\bm D} \in \calC_{\widetilde{Q}}$.

\subsubsection{$C_{\widetilde{Q}} \subseteq C_{Q}$}
\label{sec:substituted_RST_lineage_in_RST_lineage}
This direction is analogous to the one shown 
in the previous section.
Consider the lineage $F_{\widetilde{Q},\widetilde{\bm D}} \in \calC_{\widetilde{Q}}$ for some database 
$\widetilde{\bm D} = \{\widetilde{R}^n,S^x,\widetilde{T}^n\}$. We show that 
$F_{\widetilde{Q},\widetilde{\bm D}} \in \calC_{Q}$.
We start with constructing a database  
$\bm D=\{R^n,S^x_{\text{new}},T^n\}$ 
from $\widetilde{\bm D}$.
Observe that in contrast to the construction 
in Section~\ref{sec:RST_lineage_in_substituted_RST_lineage}, 
we change also the relation $S^x$.
Assume that $\widetilde{R}^n$, $S^x$, and $\widetilde{T}^n$ in $\widetilde{\bm D}$
are defined over the attributes 
$(z_1,x)$, $(x,y)$, and 
respectively $(z_2,y)$. 
We denote the value domains of 
the attributes $z_1$, $x$, $z_2$, and $y$
by $\text{Dom}(z_1)$, $\text{Dom}(x)$, $\text{Dom}(z_2)$,
and respectively $\text{Dom}(y)$.
We construct from $\widetilde{R}^n$ the relation $R^n$ over the attribute 
$x'$ with domain $\text{Dom}(x') = 
\text{Dom}(z_1) \times \text{Dom}(x)$.
We define $R^n = \{(x':(a', a)) | (z_1:a', x:a) \in \widetilde{R}^n\}$.
If a variable in $F_{\widetilde{Q},\widetilde{\bm D}}$ is associated with the tuple 
$(z_1:a', x:a)$ in $\widetilde{R}^n$, we associate it with the
tuple $(x':(a', a))$ in $R^n$.
Analogously, we construct from $\widetilde{T}^n$ the relation 
$T^n$ over the attribute $y'$ with domain $\text{Dom}(y') = 
\text{Dom}(z_2) \times \text{Dom}(y)$.
We set $T^n = \{(y':(b', b)) | (z_2:b', y:b) \in \widetilde{T}^n\}$.
If a variable is associated with the  tuple 
$(z_2:b', y:b)$ in $\widetilde{T}^n$, we associate it with the
value $(y:(b', b))$ in $T^n$.
Finally, we construct from relation $S^x$
the relation $S^x_{\text{new}}$ over the attributes $(x',y')$ 
such that $S^x_{\text{new}} = \{(x':(a',a), y':(b',b))|
(z_1:a',x:a) \in \widetilde{R}^n, (x:a,y:b) \in S^x, \text{ and } (z_2:b',y:b)\in 
\widetilde{T}^n\}$.
Observe that $F_{\widetilde{Q},\widetilde{\bm D}}$ is the lineage
of $Q$ over $\bm D$. This means that 
$F_{\widetilde{Q},\widetilde{\bm D}} \in \calC_{Q}$.
\subsection{Proof of Lemma~\ref{lm:lineage-stretching}}
\textsc{Lemma} \ref{lm:lineage-stretching}.
\textit{$\widetilde{\calC_Q} \approx^P \calC_{\widetilde Q}$ holds for any CQ $Q$ and its stretching $\widetilde Q$.}

\medskip

The high-level idea of the bidirectional transformation is as follows:
Consider the lineage $F_{Q,D}$ of $Q$ 
over a database $\bm D$ 
and a variable $X$ associated with a tuple $\bm t = (\bm x:\bm a)$ in an endogenous relation $R^n$.
Assume that $\widetilde{F}_{Q,D}$ results from 
$F_{Q,D}$ by substituting $X$ 
with the disjunction $Z_1 \vee \cdots \vee Z_{\ell}$. 
Now, consider the database $\widetilde{\bm D}$ that results from 
$\bm D$ by stretching $R^n(\bm x)$ into $\widetilde{R}^n(z, \bm x)$ and 
replacing $\bm t = (\bm x:\bm a)$ with $\ell$ new tuples 
$\bm t_1 = (z:a_1, \bm x:\bm a), \ldots, \bm t_\ell = (z:a_{\ell}, \bm x:\bm a)$
where $a_1, \ldots, a_{\ell}$ are fresh values.
Then, $\widetilde{F}_{Q,D}$ is equivalent to the lineage $F_{\widetilde{Q},\widetilde{D}}$ of $\widetilde{Q}$ over $\widetilde{\bm D}$ and can be obtained from it in polynomial time (data complexity).

We now explain the transformations in more detail. 
Consider a CQ $Q$ and its stretching $\widetilde{Q}$.
In Section~\ref{sec:left_to_right} we show that $\calC_{\widetilde{Q}}~\lesssim^P~\widetilde{\calC_Q}$
and in Section~\ref{sec:right_to_left} we show that 
$\widetilde{\calC_Q}~\lesssim^P~\calC_{\widetilde{Q}}$.

\subsubsection{$\calC_{\widetilde{Q}}~\lesssim^P~\widetilde{\calC_Q}$}
\label{sec:left_to_right}
We describe a polynomial-time algorithm 
$A$ that transforms any function $F_{\widetilde{Q},\widetilde{\bm D}}\in\calC_{\widetilde{Q}}$ into an equivalent function from $\widetilde{\calC_Q}$, for some database $\widetilde{\bm D}$. The algorithm $A$ first constructs 
from $\widetilde{\bm D}$ a database $\bm D$, where the attributes added by stretching 
are discarded. Then, it transforms $F_{\widetilde{Q},\widetilde{\bm D}}$ into an equivalent function 
$\widetilde{F}$ in polynomial time such that 
$F_{Q,\bm D} \overset{OR}{\rightarrow} \widetilde{F}$, which means  
$\widetilde{F} \in \widetilde{\calC_Q}$. 
%
%
In the following, we first describe the construction of 
$\bm D$, then we give the definition of $\widetilde{F}$, and 
finally explain the transformation from 
$F_{\widetilde{Q},\bm D}$ into $\widetilde{F}$.

\paragraph{Construction of $\bm D$}
The exogenous relations in $\widetilde{\bm D}$ remain unchanged. 
The algorithm replaces  each endogenous relation $\widetilde{R}^n$ in $\widetilde{\bm D}$ with an endogenous relation 
$R^n$ constructed as follows. Let 
$(z, \bm y)$ be the attributes of 
$\widetilde{R}^n$ where $z$ is the attribute added due to stretching.
We set $R^n = \pi_{\bm y}\widetilde{R}$, i.e.,
$R^n$ is the projection of $\widetilde{R}$ onto $\bm y$.
Given a value tuple $\bm t$ over the variables $\bm y$, let 
$\bm Z$ be the set of variables associated to the tuples in $\widetilde{R}$
whose projection onto $\bm y$ is $\bm t$. The algorithm associates 
the fresh variable $X_{\bm Z}$ to the tuple $\bm t$ in $R$.
The construction time is linear in the size of $\widetilde{\bm D}$.


\paragraph{Definition of $\widetilde{F}$}
Let us denote the set of variables in 
$F_{Q,\bm D}$ by 
$\bm X$.
We define the substitution 
$\theta = \{X_{\bm Z}:= \bigvee_{Z \in {\bm Z}}Z | X_{\bm Z} \in {\bm X} \}$
and set $\widetilde{F} = F_{Q,\bm D}[\theta]$. 
It follows $F_{Q,\bm D} \overset{OR}{\rightarrow} \widetilde{F}$.

\paragraph{Transformation of $F_{\widetilde{Q},\widetilde{\bm D}}$ into $\widetilde{F}$}
The algorithm first constructs from $\bm D$ a database $\bm D'$ where 
each lineage variable $X_{\bm Z}$ 
is replaced by the disjunction $\bigvee_{Z \in \bm Z} Z$.
It then computes the lineage $F_{Q,\bm D'}$ of $Q$ over $\bm D'$. 
By construction,
it holds $F_{Q,\bm D'} = \widetilde{F}$ and $F_{Q,\bm D'} \equiv 
F_{\widetilde{Q},\widetilde{\bm D}}$.
The construction of $F_{Q,\bm D'}$
requires the computation of the join of the relations in $\bm D'$,
which can be done in time polynomial in the size of $\bm D'$ (hence, polynomial in the size 
of $\widetilde{\bm D}$) using any 
conventional join algorithm. 

\smallskip
We conclude that the overall transformation from
$F_{\widetilde{Q},\widetilde{\bm D}}$ into $\widetilde{F}$ takes time polynomial in the size of 
$F_{\widetilde{Q},\widetilde{\bm D}}$ and $\widetilde{\bm D}$.

\subsubsection{$\widetilde{\calC_Q}~\lesssim^P~\calC_{\widetilde{Q}}$}
\label{sec:right_to_left}
We give a polynomial-time algorithm 
$B$ that transforms functions in 
$\widetilde{\calC_Q}$
into equivalent functions  
in $\calC_{\widetilde{Q}}$.
Let $\widetilde{F} \in \widetilde{\calC_{Q}}$. 
This means that there is a database 
$\bm D$ and an OR-substitution $\theta$
such that $F_{Q,\bm D}[\theta] = \widetilde{F}$. 
We first explain how algorithm $B$ transforms
$\widetilde{F}$ in polynomial time into an equivalent function $\widetilde{F}'$ in DNF. 
Then, we show that $\widetilde{F}' \in \calC_{\widetilde{Q}}$, which 
concludes the proof. 

\paragraph{Transformation of $\widetilde{F}$ into $\widetilde{F}'$ in DNF}
Assume that $F_{Q,\bm D} = C_1 \vee \cdots \vee C_p$ 
where each $C_i$ is the conjunction of the variables in some set
$\bm X_i$. 
We set $\bm X = \bigcup_{i =1}^p \bm X_i$.
Assume that $\theta$ is defined as 
$\{X:= \bigvee_{Z \in \bm Z_X} Z | X \in \bm X\}$,
where for each $X \in \bm X$, $\bm Z_X$ is a set of
fresh variables. This means that 
$\widetilde{F} = C_1' \vee \cdots \vee C_p'$,
where 
$$C_i' = \bigwedge_{X \in \bm X_i}\ \ \bigvee_{Z \in \bm Z_X} Z.$$
Algorithm $B$ transform each such $C_i'$
 into a disjunction $C_i''$ of conjunctions.
 Assume that $\bm X_i = X_1 \wedge \cdots \wedge X_m$. 
 Then,
$$C_i'' = 
\bigvee_{Z_1 \in X_1, \ldots , Z_{m} \in X_m}\ \
\bigwedge_{j =1}^{m} Z_j.$$
The algorithms sets 
$\widetilde{F}' = C_1'' \vee \cdots \vee C_p''$.
The equivalence follows from the distributivity 
of $\vee$ over $\wedge$.
The transformation can be done in time  polynomial in the size of 
$\widetilde{F}$.

\paragraph{$\widetilde{F}' \in \calC_{\widetilde{Q}}$}
We turn $\bm D$ into a database 
$\widetilde{\bm D}$ such that the lineage of $\widetilde{Q}$
over $\widetilde{\bm D}$ is equal to $\widetilde{F}'$. 
The exogenous relations 
in $\bm D$ remain unchanged. 
For each endogenous relation $R^n$ over the attributes $\bm y$, 
we construct a relation $\widetilde{R}^n$ 
over $(z, \bm y)$.
For each value tuple $\bm t$ in $R^n$ associated with the lineage variable $X$, we add 
the following new tuples 
to $\widetilde{R}^n$. 
Let $\theta(X) = Z_1 \vee \cdots \vee Z_\ell$.  
We add to $\widetilde{R}^n$ the tuples $\bm t_1, \ldots, t_\ell$,
where each $\bm t_i$ results from $\bm t$ 
 by adding a fresh value for attribute $z$. 
We associate the tuples  $\bm t_1, \ldots, \bm t_\ell$
in $\widetilde{R}^n$ with the lineage variables $Z_1, \ldots, Z_\ell$,
respectively. It follows from the construction of $\widetilde{\bm D}$
that the lineage $F_{\widetilde{Q},\widetilde{\bm D}}$ is equal to 
$\widetilde{F}'$. 
Hence, $\widetilde{F}' \in \calC_{\widetilde{Q}}$.

\subsection{Proof of Lemma~\ref{lm:hierarchical-stretching}}
\textsc{Lemma}~\ref{lm:hierarchical-stretching}.
\textit{A CQ $Q$ is hierarchical iff its stretching $\widetilde Q$ is hierarchical.}

\medskip 

The main idea is that the class of hierarchical queries 
is closed under adding or removing  
variables that are contained in a single atom. To prove this formally, we first recall 
that for any two variables $x$ and $y$ in a hierarchical query, one of the following three properties (called {\em hierarchical properties} in the following) must hold:
$at(x)\cap at(y)=\emptyset$, $at(x)\subseteq at(y)$, or $at(y)\subseteq at(x)$. 
We show each direction of Lemma~\ref{lm:hierarchical-stretching} separately.

``$\Rightarrow$''-direction:
Assume that $Q$ is hierarchical. 
If $\widetilde{Q}$ does not contain any fresh variable,
then it is obviously hierarchical. So, let $x$ be a fresh variable
in $\widetilde{Q}$ that does not appear in $Q$ and let $y$ be 
an arbitrary variable in $\widetilde{Q}$.
By the definition of stretching, we have $|at(x)| =1$. 
For the sake of contradiction, assume that the hierarchical 
properties do not hold for $x$ and $y$ in $\widetilde{Q}$.
This means that $at(x)\cap at(y)\neq\emptyset$, yet $at(x) \not\subseteq at(y)$ and $at(y) \not\subseteq at(x)$. So $x$ appears in at least two atoms and the same for $y$. This contradicts our assumption that 
$|at(x)| =1$. Hence, $\widetilde{Q}$ must be hierarchical.


``$\Leftarrow$''-direction:
Assume that $\widetilde{Q}$ is hierarchical. 
Consider two distinct variables $x$ and $y$ in $Q$. By the definition of stretching, 
these variables 
must also appear in $\widetilde{Q}$. Since $\widetilde{Q}$ is hierarchical, 
one of the three hierarchical properties must hold for $x$ and $y$ in $\widetilde{Q}$. 
Since stretching only extends {\em existing} atoms, $at(x)$ and $at(y)$ are the same for both $Q$ and $\widetilde Q$. This means that one of the three hierarchical properties must also hold for $x$ and $y$ in $Q$. This implies that $Q$ is hierarchical.

\end{document}